\theoremstyle{plain}
\newtheorem{theorem}{Theorem}
\newtheorem{prop}[theorem]{Proposition}
\newtheorem{theo}[theorem]{Theorem}
\newtheorem{lemma}[theorem]{Lemma}
\newtheorem{definition}[theorem]{Definition}
\newtheorem{example}[theorem]{Example}
\theoremstyle{remark}
\newtheorem{remark}[theorem]{Remark}
\numberwithin{theorem}{section}
\newcommand{\be}%
  {\protect\setcounter{equation}{\value{subsubsection}}}  
\newcommand{\ee}%
\newcommand{\Z}{\mathbb{Z}}
\newcommand{\F}{\mathbb{F}}
\newcommand{\xn}{x^n - 1}
\begin{document}


\title {Cyclic codes over the ring $ \Z_p[u, v]/\langle u^2, v^2, uv-vu\rangle$}

\author{Pramod Kumar Kewat, Bappaditya Ghosh and Sukhamoy Pattanayak} 
\address{Department of Applied Mathematics\\ 
         Indian School of Mines\\
         Dhanbad 826 004,  India}
\email{kewat.pk.am@ismdhanbad.ac.in\\bappaditya.ghosh86@gmail.com\\sukhamoy88@gmail.com}
\subjclass{94B15} 

\keywords{Cyclic codes, Hamming distance}
\begin{abstract}
Let $p$ be a prime number. In this paper, we study cyclic codes over the ring $ \Z_p[u, v]/\langle u^2, v^2, uv-vu\rangle$. We find a unique set of generators for these codes. We also study the rank and the Hamming distance of these codes. We obtain all except one ternary optimal code of length 12 as the Gray image of the cyclic codes over the ring $ \Z_p[u, v]/\langle u^2, v^2, uv-vu\rangle$. We also characterize the $p$-ary image of these cyclic codes under the Gray map. 
\end{abstract}

\maketitle


\markboth{P.K. Kewat, B. Ghosh and S. Patnayak}{Cyclic codes over the ring $R_{u^2,v^2,p}$}
\section{Introduction}
Let $R$ be a ring. A linear code of length $n$ over $ R$ is a $R$ submodule of $R^{n}$. A linear code $C$ of length $n$ over $R$ is cyclic if $ (c_{n-1}, c_0, \dots, c_{n-2}) \in C$ whenever $ (c_0, c_1, \dots, c_{n-1}) \in C$. We can consider a cyclic code $C$ of length $n$ over $R$ as an ideal in $R[x]/\langle x^n - 1\rangle$ via the following correspondence
\begin{center} $ R^{n} \longrightarrow R[x]/\langle x^n - 1\rangle, ~~ (c_0, c_1, \dots, c_{n-1})\mapsto c_0 + c_1x + \cdots + c_{n-1}x^{n-1}.$
\end{center}

The cyclic codes form an important class of codes due to their rich algebraic structures and play a significant role in algebraic coding theory. The structure of cyclic codes over rings has been discussed in a series of papers \cite{Tah-Oeh04, Tah-Oeh03, Blac03, Bonn-Udaya99, Cal-Slo95, Cal-Slo98, Con-Slo93,  Dou-Shiro01,Ples-Qian96, Lint91}. The structures of cyclic codes of length $n$ over a finite chain ring have been discussed in \cite{Dinh-Lopez04}  when $n$ is not divisible by the characteristic of the residue field $\bar{R}$. 
The structures of cyclic codes of length $n$, where $n$ is divisible by the characteristic of the residue field $\bar{R}$,  over some finite chain ring have been discussed in \cite{Tah-Siap07,Ash-Ham11,Dinh10,Aks-Pkk13}. 

Yildiz and Karadeniz in \cite{Yil-Kar11} have considered the ring $\F_2[u,v]/\langle u^2, v^2, uv-vu \rangle$, which is not a chain ring, and studied cyclic codes of odd length over that. They have find some good binary codes as the Gray images of these cyclic codes. The authors of \cite{Dou-Yil-Kar12} have considered the more general ring $\F_2[u_1, u_2, \cdots, u_k]/\langle u_{i}^2, u_{j}^2, u_iu_j-u_ju_i \rangle$, and studied the general properties of cyclic codes over these rings and  characterized the nontrivial one-generator cyclic codes. Sobhani and Molakarimi in \cite{Sob-Mor13} extended these studies to cyclic codes over the ring $\F_{2^m}[u, v]/\langle u^2, v^2, uv-vu \rangle$. 

\indent
Let $R_{u^2,v^2,p} = \Z_p + u\Z_p + v\Z_p+uv\Z_p$, where $u^2=0$, $v^2=0$, $ uv =vu$ and $p$ is a prime number. Note that the ring $R_{u^2,v^2,p}$ can also be viewed as the quotient ring $ \Z_p[u,v]/\langle u^2, v^2, uv-vu\rangle$. In this paper, we discuss the structure of cyclic codes of arbitrary length over the ring $R_{u^2,v^2, p}$. We find a unique set of generators, rank and a minimal spanning set for these codes. We also find the Hamming distance of these codes for length $p^l$. Note that the rank and the Hamming distance of these cyclic codes for $p=2$ have not been discussed in  \cite{Dou-Yil-Kar12, Sob-Mor13, Yil-Kar11}. Hence getting the rank and the Hamming distance of the cyclic codes over $R_{u^2,v^2,p} $ are new even for $p=2$.
We obtain all except one optimal code over $\Z_3$ of length 12 as the images of cyclic codes over $R_{u^2,v^2,p}$ under the Gray map. We also characterize the $p$-ary image of these cyclic codes under the Gray map.


\indent
Let $R_{u^2,p} = \Z_p +u\Z_p, u^2=0$ and $R_{u^2,p,n}= R_{u^2,p}/\langle\xn \rangle$. The line of arguments we have used to find a set of generators are similar to \cite{Tah-Siap07, Aks-Pkk13}. Let $C$ be a cyclic code over the ring $R_{u^2,v^2,p}$. The idea to find a set of generators is as follows. We view the cyclic code $C$ as an ideal in $R_{u^2,v^2,p,n}= R_{u^2,v^2,p}/\langle\xn \rangle$. Then we define the projection map from $R_{u^2,v^2,p,n} \longrightarrow R_{u^2,p,n}$ and we get an ideal in $R_{u^2,p,n}$, which gives a cyclic codes over $R_{u^2,p}$. The structure of cyclic codes over $R_{u^2,p}$  is known from \cite{Aks-Pkk13}. By pullback, we find a set of generators for a cyclic code over $R_{u^2,v^2,p}$. By using the division algorithm and direct computations, we have find the rank of these cyclic codes. The line of arguments we have used to find minimum distance are similar to \cite{Tah-Siap07, Aks-Pkk13} . \\
\
\indent
This paper is organized as follows. In Section \ref{pre}, we discuss the preliminaries that we need. In Section \ref{generator} we give a unique set of generators for the cyclic codes $C$ over the ring $R_{u^2, v^2,p}$. In Section \ref{rank}, we find a minimal spanning set for these codes and discuss about the rank. In Section \ref{md}, we find the minimum distance of these codes for length $p^l$. In Section \ref{exm}, we discuss some of the examples of these codes and give a list of cyclic codes $C$ of length 3 over $R_{u^2, v^2,3}$ whose ternary image gives optimal code under the Gray map. In Section \ref{p-ary-im}, we characterize the $p$-ary image of cyclic codes over $R_{u^2, v^2,p}$ under the Gray map.  

\section{Preliminaries}\label{pre}


A ring $R$ is called local ring if $R$ has a unique maximal ideal. Let $R$ be a finite commutative local ring with maximal ideal $M$. We define the residue field $\overline{R} = R/M$. Let $\mu : R[x] \rightarrow \overline{R}[x]$ denote the natural ring homomorphism that maps $r \mapsto r + M$ and the variable $x$ to $x$. We define the degree of the polynomial $f(x) \in R[x]$ as the degree of the polynomial $\mu(f(x))$ in $\overline{R}[x]$, i.e., $deg(f(x)) = deg(\mu(f(x))$ (see, for example, \cite{McDonald74}). A polynomial $f(x) \in R[x]$ is called regular if it is not a zero divisor.
The following conditions are equivalent for a finite commutative local ring $R$.
\begin{prop} {\rm (cf. \cite[Exercise XIII.2(c)]{McDonald74})} \label{regular-poly}
Let $R$ be a finite commutative local ring. Let $f(x) = a_0+a_1x+ \cdots +a_nx^n$ be in $R[x]$, then the following are equivalent.
\begin{enumerate}[{\rm (1)}]
 \item $f(x)$ is regular; \label{1}
\item  $\langle a_0, a_1, \cdots , a_n \rangle = R$; \label{4} 
\item  $a_i$ is an unit for some $i$, $0 \leq i \leq n$; \label{3}
\item  $\mu(f(x)) \neq 0$; \label{2}
\end{enumerate}
\end{prop}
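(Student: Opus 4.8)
The plan is to fix the unique maximal ideal $M$ of $R$ and use two standard features of a finite commutative local ring throughout: an element of $R$ is a unit if and only if it lies outside $M$, and an ideal equals $R$ if and only if it is not contained in $M$; and $M$ is nilpotent, say $M^{k}=0$ with $M^{k-1}\neq 0$ (the case $M=0$, i.e.\ $R$ a field, is trivial, since then $\mu$ is injective on coefficients and all four conditions reduce to $f(x)\neq 0$). With this in hand, the equivalences among (2), (3), (4) are immediate: $\mu(f(x))=\sum_{i}\overline{a_i}\,x^{i}$ is nonzero in $\overline{R}[x]$ exactly when some $\overline{a_i}\neq 0$, i.e.\ some $a_i\notin M$, i.e.\ some $a_i$ is a unit, i.e.\ $\langle a_0,\dots,a_n\rangle\not\subseteq M$, i.e.\ $\langle a_0,\dots,a_n\rangle=R$.

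It then remains to link this common condition to regularity, and I would do this by proving (1)$\Rightarrow$(2) and (3)$\Rightarrow$(1); since (2), (3), (4) are already mutually equivalent, this closes the cycle. For (1)$\Rightarrow$(2) I argue by contraposition: if $\langle a_0,\dots,a_n\rangle\neq R$, then every $a_i\in M$, so any nonzero $c\in M^{k-1}$ satisfies $ca_i\in M^{k-1}M=M^{k}=0$ for all $i$; hence $c\,f(x)=0$ with $c\neq 0$, so $f(x)$ is a zero divisor and is not regular.

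The substantive step is (3)$\Rightarrow$(1), and I expect the $M$-adic bookkeeping there to be the main obstacle. Suppose $a_m$ is a unit and $g(x)f(x)=0$ for some $g(x)\neq 0$; I must show $g(x)=0$. Reducing modulo $M$ and using that $\overline{R}[x]$ is an integral domain together with $\mu(f(x))\neq 0$ forces $\mu(g(x))=0$, so all coefficients of $g(x)$ lie in $M$. To upgrade this to $g(x)=0$, let $t\geq 1$ be the largest integer with all coefficients of $g(x)$ in $M^{t}$ (finite, since $M^{k}=0$ and $g(x)\neq 0$); then some coefficient of $g(x)$ is not in $M^{t+1}$, so the reduction $\overline{g(x)}$ of $g(x)$ modulo $M^{t+1}$ is nonzero. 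Because $M\cdot M^{t}\subseteq M^{t+1}$, the quotient $M^{t}/M^{t+1}$ is naturally a vector space over $\overline{R}=R/M$, and reducing $g(x)f(x)=0$ modulo $M^{t+1}$ turns it into $\mu(f(x))\cdot\overline{g(x)}=0$ inside the $\overline{R}[x]$-module $\bigl(M^{t}/M^{t+1}\bigr)[x]$. The latter is free (hence torsion-free) over the domain $\overline{R}[x]$ and $\mu(f(x))\neq 0$ since $\overline{a_m}\neq 0$, so $\overline{g(x)}=0$, a contradiction. A shorter route at this last step is to invoke McCoy's theorem---a zero divisor in $R[x]$ over a commutative ring is annihilated by a nonzero element of $R$---which here would force the unit $a_m$ to be annihilated, an impossibility; thus $f(x)$ is regular.
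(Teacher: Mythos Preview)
Your proof is correct and follows essentially the same skeleton as the paper's: both use nilpotency of the maximal ideal to show that a polynomial whose coefficients all lie in $M$ is annihilated by a nonzero constant (the paper picks the explicit element $b_1^{t_1-1}\cdots b_n^{t_n-1}$, while you more cleanly take any nonzero $c\in M^{k-1}$), and both invoke McCoy's theorem to get regularity from the existence of a unit coefficient. The one genuine addition in your write-up is the $M$-adic filtration argument for $(3)\Rightarrow(1)$, which avoids citing McCoy by working in the free $\overline{R}[x]$-module $(M^{t}/M^{t+1})[x]$; this makes the proof self-contained, whereas the paper simply appeals to McCoy's theorem for the step $(4)\Rightarrow(1)$. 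Organizationally the paper runs the implications as a cycle $(1)\Rightarrow(2)\Rightarrow(3)\Rightarrow(4)\Rightarrow(1)$, while you collapse $(2)\Leftrightarrow(3)\Leftrightarrow(4)$ first, but this is a cosmetic difference.
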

\begin{proof}
 $(1) \Rightarrow (2):$ If possible, let $f(x)$ be a regular polynomial but $\langle a_0, a_1, \cdots ,$ $a_n \rangle$ $\neq R$. Since $\langle a_0, a_1, \cdots , a_n \rangle$ is an ideal and $R$ is a local ring, we have $\langle a_0, a_1, \cdots , a_n \rangle \subseteq M$. Therefore, all $a_i$ are non unit elements. In a finite local ring, if $M=\langle b_1, b_2, \cdots , b_n\rangle$ and the nilpotency index of $b_i$ are $t_i$ respectively, then the element $b_1^{t_1-1}b_2^{t_2-1}\cdots b_n^{t_n-1}$ is the zero divisor of all non unit elements. So the polynomial $g(x)=b_1^{t_1-1}b_2^{t_2-1}\cdots b_n^{t_n-1}$ is the zero divisor of $f(x)$. That is $f(x)g(x)=0$. Hence, a contradiction. Therefore, $\langle a_0, a_1, \cdots , a_n\rangle = R$.\\\\
$(2) \Rightarrow (3):$ If possible, let all $a_i$ are non unit. Then $a_i\in M$  for all $i$. This implies that $R=M$. This gives a contradiction. Hence, $a_i$ is an unit for some $i$, $0 \leq i \leq n$.\\\\
$(3)\Rightarrow(4):$ If $a_i$ is an unit, then $a_i \notin M.$ This gives $a_i+M \neq 0$. Hence, $\mu(f(x)) \neq 0$.\\\\
$(4)\Rightarrow(1):$ Let $\mu(f(x)) \neq 0$. We have $a_i+M \neq 0$ for some $i$. Therefore, $a_i$ is an unit in $R$. So there does not exists non zero element $d \in R$ such that $da_i=0$. By McCoy Theorem (See \cite[Theorem 2]{Mccoy42}), we can say that $f(x)$ is not a divisor zero. That is $f(x)$ is regular.
\end{proof}

The following version of the division algorithm holds true for polynomials over finite commutative local rings.
\begin{prop}{\rm (cf. \cite[Exercise XIII.6]{McDonald74})} \label{division-alg}
Let $R$ be a finite commutative local ring. Let $f(x)$ and $g(x)$ be non zero polynomials in $R[x]$. If $g(x)$ is regular, then there exist polynomials $q(x)$ and $r(x)$ in $R[x]$ such that $f(x) =g(x)q(x) + r(x)$ and $deg(r(x)) < deg(g(x))$.
\end{prop}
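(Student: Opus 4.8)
The plan is to prove the statement by strong induction on $\deg(f)$, where ``degree'' always means the degree in the sense defined above, namely $\deg(h)=\deg(\mu(h))$. First I would isolate the one algebraic fact that drives the argument: if $h(x)=c_0+c_1x+\cdots$ has $\deg(h)=d$, then the coefficient $\bar{c_d}$ of $x^d$ in $\mu(h)$ is nonzero, so $c_d\notin M$, and since $R$ is local this means $c_d$ is a unit; moreover every coefficient of $h$ at an index strictly larger than $d$ lies in $M$ (as those coefficients vanish in $\mu(h)$). Applying this to $g$, and writing $m=\deg(g)$ and $g(x)=b_0+b_1x+\cdots$, we get that $b_m$ is a unit while $b_j\in M$ for $j>m$. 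Note also that $g$ regular forces $\mu(g)\neq 0$ by Proposition \ref{regular-poly}, so $g\neq0$ and $m\geq0$ is well defined; if $f=0$ we may simply take $q=r=0$, so assume $f\neq0$.

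For the base case, if $\deg(f)<m$ take $q(x)=0$ and $r(x)=f(x)$. For the inductive step, suppose $k:=\deg(f)\geq m$, let $a_k$ be the coefficient of $x^k$ in $f$ (a unit, by the fact above), and put
\[
f_1(x)=f(x)-a_k b_m^{-1}x^{k-m}g(x).
\]
The crucial verification is that $\deg(f_1)<k$: the coefficient of $x^k$ in $a_kb_m^{-1}x^{k-m}g$ equals $a_kb_m^{-1}b_m=a_k$, so the $x^k$-coefficient of $f_1$ is zero; and for $j>k$ the $x^j$-coefficient of $f$ lies in $M$ (since $\deg(f)=k$) while that of $a_kb_m^{-1}x^{k-m}g$ equals $a_kb_m^{-1}b_{\,j-k+m}$ with $j-k+m>m$, hence also lies in $M$. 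Thus $\mu(f_1)$ has no term of degree $\geq k$. Now the induction hypothesis (or triviality, if $f_1=0$) yields $f_1=gq_1+r$ with $\deg(r)<m$, and then
\[
f=g\,\bigl(a_kb_m^{-1}x^{k-m}+q_1\bigr)+r
\]
is the required decomposition.

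The only real subtlety, and the step I would spell out with care, is that $g$ need not have a unit ``leading coefficient'' in the naive termwise sense: only the coefficient at position $\deg(g)$ is guaranteed to be a unit, and $g$ may drag along higher-degree terms whose coefficients sit in $M$. Because of this, the naive subtraction step can actually raise the ordinary termwise degree, so one cannot run the classical Euclidean division verbatim; the fix is precisely to measure size by $\deg(\mu(\cdot))$ and to observe that the spurious high-degree terms produced along the way all have coefficients in the maximal ideal and therefore do not affect $\mu$. Once that observation is in hand, the induction terminates because $\deg(f)$ strictly decreases at each step and is bounded below, and the proof closes just as in the field case.
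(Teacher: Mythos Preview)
The paper does not actually prove this proposition; it is stated with a citation to McDonald's exercise and then used freely afterward. So there is no ``paper's proof'' to compare against.

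Your argument is correct and is the standard one. The key observation you isolate---that the coefficient of $g$ at index $m=\deg(g)$ is a unit while all higher-index coefficients lie in $M$, so the subtraction step strictly lowers $\deg(\mu(\cdot))$ even if it raises the naive termwise degree---is exactly the point of working with this modified notion of degree, and you handle it cleanly.

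One small point worth making explicit: when $f\neq 0$ but $\mu(f)=0$ (i.e., every coefficient of $f$ lies in $M$), the quantity $\deg(f)=\deg(\mu(f))$ is the degree of the zero polynomial, conventionally $-\infty$. Your base case ``$\deg(f)<m$'' then covers this situation with $q=0$, $r=f$, and the same convention is needed when $f_1$ happens to satisfy $\mu(f_1)=0$ during the induction. You rely on this implicitly; stating it once up front removes any ambiguity.
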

\subsection{{\rm The ring} $R_{u^2,v^2,p}$} $ $

Let $R_{u^2,v^2,p} = \Z_p + u\Z_p + v\Z_p + uv\Z_p, u^2=0$, $v^2=0$ and $uv = vu$. It is easy to see that the ring $R_{u^2,v^2,p}$ is a finite commutative local ring with unique maximal ideal $\langle u,v\rangle$. The set $\{ \{0\}, \langle u\rangle, \langle v\rangle, \langle u + \alpha v\rangle, \langle u,v\rangle\}$ gives all ideals of $R_{u^2, v^2, p}$, where $\alpha$ is a non zero element of $\Z_p$. Since the maximal ideal $\langle u,v\rangle$ is not principal, the ring $R_{u^2, v^2, p}$ is not a chain ring. 

Let $g(x)$ be a non zero polynomial in $\Z_p[x]$. By Proposition \ref{regular-poly}, it is easy to see that the polynomial $g(x)+up_1(x)+vp_2(x)+uvp_3(x) \in R_{u^2,v^2,p}[x]$ is regular. Note that $\text{deg}(g(x)+up_1(x)+vp_2(x)+uvp_3(x))= \text{deg}(g(x))$.

\subsection{\rm The Gray Map}\label{graymap} $ $

Let $w_L$ and $w_H$ and denote the Lee weight and Hamming weight respectively. Following \cite{Yil-Kar11}, we define the Lee weight as follows
\begin{equation*}
 w_L(a + ub + vc +uvd) = w_H(a+b+c+d, c+d, b+d, d) ~~ \forall~ a, b, c, d \in \Z_p.
\end{equation*}
Again as in \cite{Yil-Kar11}, we define the Gray map $\phi_L$ : $R_{u^2, v^2, p} \rightarrow \Z_{p}^4$ as follows
\begin{equation*}
 \phi_L(a + ub + vc +uvd) = (a+b+c+d, c+d, b+d, d) ~~ \forall~ a, b, c, d \in \Z_p.
\end{equation*}
The Gray map naturally extend to $R_{u^2, v^2, p}^n$ as distance preserving isometry $\phi_L$ : ($R_{u^2, v^2, p}^n, \text{Lee Weight}) \rightarrow (\Z_{p}^{4n}, \text{Hamming weight})$ as follows
\begin{equation*}
 \phi_L(a_1, a_2, \cdots, a_n) = (\phi_L(a_1), \phi_L(a_2), \cdots, \phi_L(a_n)) ~~ \forall~ a_i\in R_{u^2, v^2, p}.
\end{equation*}
By linearity of the map $\phi_L$ we obtain the following.
\begin{theo}
If $C$ is a linear code over  of length $n$, size $p^k$ and
minimum Lee weight $d$, then $\phi_L(C)$ is a p-ary linear code with parameters $[4n, k, d]$.

\end{theo}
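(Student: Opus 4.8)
The plan is to verify separately the three assertions bundled into the statement: that $\phi_L(C)$ is $\Z_p$-linear, that it has length $4n$ and exactly $p^k$ codewords (hence $\Z_p$-dimension $k$), and that its minimum Hamming weight equals $d$. Each of these reduces to an elementary property of the coordinatewise map $\phi_L : R_{u^2,v^2,p}^{\,n} \to \Z_p^{4n}$, namely that it is additive, $\Z_p$-linear, bijective, and carries Lee weight to Hamming weight.

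First I would record that on a single coordinate the map $\phi_L(a+ub+vc+uvd) = (a+b+c+d,\, c+d,\, b+d,\, d)$ is an additive bijection of $R_{u^2,v^2,p}$ onto $\Z_p^4$: additivity is immediate, and it is invertible since from a $4$-tuple $(y_1,y_2,y_3,y_4)$ one recovers $d=y_4$, $b=y_3-y_4$, $c=y_2-y_4$, $a=y_1-y_2-y_3+y_4$. It is also $\Z_p$-linear, because for $\lambda\in\Z_p$ we have $\lambda(a+ub+vc+uvd)=\lambda a+u\lambda b+v\lambda c+uv\lambda d$, whose image under $\phi_L$ is $\lambda\,\phi_L(a+ub+vc+uvd)$. (It is worth noting that $\phi_L$ is \emph{not} linear over all of $R_{u^2,v^2,p}$; it is the underlying $\Z_p$-module structure that matters for the image.) Extending coordinatewise, $\phi_L$ is an additive $\Z_p$-linear bijection $R_{u^2,v^2,p}^{\,n}\to\Z_p^{4n}$, so $\phi_L(C)$ is a $\Z_p$-submodule of $\Z_p^{4n}$, i.e.\ a $p$-ary linear code of length $4n$, and $|\phi_L(C)|=|C|=p^k$ forces its $\Z_p$-dimension to be $k$.

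Next I would treat the distance. By the definitions of the Lee weight and of $\phi_L$, on a single coordinate $w_L(a+ub+vc+uvd)=w_H(a+b+c+d,c+d,b+d,d)=w_H(\phi_L(a+ub+vc+uvd))$, and since $w_L$ on $R_{u^2,v^2,p}^{\,n}$ and $w_H$ on $\Z_p^{4n}$ are both sums of the corresponding one-coordinate weights, we get $w_L(x)=w_H(\phi_L(x))$ for all $x$; this is exactly the isometry already observed in Section~\ref{graymap}. Since $\phi_L$ is a bijection, $x\neq 0$ if and only if $\phi_L(x)\neq 0$, hence
\[
\min\{\,w_H(y): y\in\phi_L(C),\ y\neq 0\,\}=\min\{\,w_L(x): x\in C,\ x\neq 0\,\}=d .
\]
Finally, as $\phi_L(C)$ is linear over $\Z_p$, its minimum Hamming distance coincides with its minimum Hamming weight, which the display shows is $d$. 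Hence $\phi_L(C)$ has parameters $[4n,k,d]$.

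I do not expect a genuine obstacle: the statement is a formal consequence of $\phi_L$ being a $\Z_p$-linear bijective isometry, all of which is already set up in Section~\ref{graymap}. The only points deserving a moment's care are the distinction between $\Z_p$-linearity (which holds and is what is needed) and $R_{u^2,v^2,p}$-linearity (which fails), and the small check that the one-coordinate map is genuinely onto $\Z_p^4$ rather than merely injective.
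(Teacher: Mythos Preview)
Your proof is correct and follows essentially the same approach as the paper, which simply asserts the theorem as an immediate consequence of the $\Z_p$-linearity of $\phi_L$ together with the isometry property built into the definition of Lee weight. You have merely made explicit the easy verifications (additivity, bijectivity, $\Z_p$-linearity, and the weight-preserving property) that the paper leaves to the reader.
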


\section{A generator for cyclic codes over the ring $R_{u^2, v^2, p}$}\label{generator}
Let $p$ be a prime number and $n$ be a positive integer. Let $R_{u^2,v^2,p} = \Z_p + u\Z_p + v\Z_p + uv\Z_p, u^2=0$, $v^2=0$ and $uv = vu$. We can write $R_{u^2,v^2,p}$ as $R_{u^2,v^2,p}= R_{u^2,p} + vR_{u^2,p}, v^2=0$, where $R_{u^2,p}=\Z_p + u\Z_p, u^2=0$. Let $R_{u^2,v^2,p,n}=R_{u^2,v^2,p}[x]/\langle x^n-1\rangle$. Let $C$ be a cyclic code of length $n$ over $R_{u^2,v^2,p}$. We also consider $C$ as an ideal in $R_{u^2,v^2,p,n}$. We define the map $\psi : R_{u^2,v^2,p} \rightarrow R_{u^2,p}$ by $\psi(\alpha + v \beta) = \alpha$, where $\alpha, \beta \in R_{u^2,p}$ . Clearly the map $\psi$ is a ring homomorphism. We extend this homomorphism to a homomorphism $\phi$ from $C$ to the ring $R_{u^2,p,n}$ defined by
\[\phi (c_0+c_1x+\cdots+c_{n-1}x^{n-1})=\psi(c_0)+\psi(c_1)x+\cdots+\psi(c_{n-1})x^{n-1},\]where $c_i \in R_{u^2,v^2,p}$. Let $J=\{r(x)\in R_{u^2,p,n}[x]: vr(x) \in \text{ker}\phi\}$. We see that $J$ is an ideal of $R_{u^2,p,n}$. Hence we can consider $J$ as a cyclic code over $R_{u^2,p}$. Now we know from \cite{Aks-Pkk13} that any ideal of $R_{u^2,p,n}$ is of the form $\langle h(x)+uq(x),ub(x)\rangle$, where $h(x), q(x), b(x) \in \Z_p[x]$, $b(x)|h(x)|(x^n-1)$ 
mod $p$ and $b(x)|q(x)\left(\frac{x^n-1}{h(x)}\right)$. So $J=\langle h(x)+uq(x),ub(x)\rangle$. Therefore, we can write $\text{ker}\phi=\langle vh(x)+uvq(x),uvb(x)\rangle$. Since $\phi$ is a homomorphism, the image $\text{Im}\phi$ is an ideal of $R_{u^2,p,n}$. Hence, $\text{Im}\phi$ is a cyclic code over $R_{u^2,p}$. Again we can write $\text{Im}\phi$ as above. That is $\text{Im}\phi=\langle g(x)+up(x),ua(x)\rangle$, where $a(x)|g(x)|(x^n-1)$ mod $p$ and $a(x)|p(x)\left(\frac{x^n-1}{g(x)}\right)$. So the code $C$ can be written as $C=\langle g(x)+up_1(x)+vq_1(x)+uvr_1(x),ua_1(x)+vq_2(x)+uvr_2(x),va_2(x)+uvr_3(x),uva_3(x)\rangle$, where $a_3(x)|a_1(x)|g(x)|(x^n-1)$ and $a_3(x)|a_2(x)|g(x)|(x^n-1)$.\\

For an ideal $C$ of the ring $R_{u^2,v^2,p,n}=R_{u^2,v^2,p}[x]/\langle x^n-1\rangle$, we define the residue and the torsion of the ideal $C$ as (see \cite{Dou-Yil-Kar12})
\begin{align*}
&\text{Res}(C)=\{a\in R_{u^2,p,n}|~\exists ~b\in R_{u^2,p,n}: a+vb\in C\} ~\text{and}\\
&\text{Tor}(C)=\{a\in R_{u^2,p,n}|~va\in C\}
\end{align*}
It can be easily shown that when $C$ is an ideal of $R_{u^2,v^2,p,n}$, $\text{Res}(C)$ and $\text{Tor}(C)$ both are ideals of $R_{u^2,p,n}$. And also it is easy to show that $\text{Res}(C)=\text{Im}\phi$ and $\text{Tor}(C)=J$. Now we define four ideals associated to C.
\begin{align*}
&C_1=\text{Res}(Res(C)) = C ~\text{mod} \langle u,v\rangle,\\
&C_2=\text{Tor}(\text{Res}(C)) = \{f(x)\in \Z_p[x]~|~uf(x)\in C ~\text{mod} ~v\},\\
&C_3=\text{Res}(\text{Tor}(C)) = \{f(x)\in \Z_p[x]~|~vf(x)\in C ~\text{mod}~ uv\} ~\text{and}\\
&C_4=\text{Tor}(\text{Tor}(C)) = \{f(x)\in \Z_p[x]~|~uvf(x)\in C\}.
\end{align*}
These are ideals of $\Z_p[x]/\langle x^n-1\rangle$, hence principal ideal and we have $C_1=\langle g(x)\rangle$, $C_2=\langle a_1(x)\rangle$, $C_3=\langle a_2(x)\rangle$, $C_4=\langle a_3(x)\rangle$.
\begin{theo}
Any ideal $C$ of the ring $R_{u^2,v^2,p,n}$ is uniquely generated by the polynomial $A_1(x)=g(x)+up_1(x)+vq_1(x)+uvr_1(x)$, $A_2(x)=ua_1(x)+vq_2(x)+uvr_2(x)$, $A_3(x)=va_2(x)+uvr_3(x)$ and $A_4(x)=uva_3(x)$ where $p_i(x)$'s, $q_i(x)$'s, $r_i(x)$'s are zero or ${\rm deg}(p_1(x))<{\rm deg}(a_1(x))$, ${\rm deg}(q_1(x))<{\rm deg}(a_2(x))$, ${\rm deg}(r_1(x))<{\rm deg}(a_3(x))$, ${\rm deg}(q_2(x))<{\rm deg}(a_2(x))$, ${\rm deg}(r_2(x))<{\rm deg}(a_3(x))$, ${\rm deg}(r_3(x))<{\rm deg}(a_3(x))$.
\end{theo}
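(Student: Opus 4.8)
The plan is to start from the generating set $A_1,A_2,A_3,A_4$ already exhibited in the discussion preceding the theorem — where $g,a_1,a_2,a_3$ are the (unique monic) generators of $C_1,C_2,C_3,C_4$ and $a_3\mid a_1\mid g\mid x^n-1$, $a_3\mid a_2\mid g\mid x^n-1$ — and then do two things: normalise the tail polynomials so that the asserted degree bounds hold, and prove that the normalised system is unique. For the normalisation, observe first that $a_1,a_2,a_3$, being divisors of $x^n-1$, are monic, hence regular by Proposition~\ref{regular-poly}, so the division algorithm of Proposition~\ref{division-alg} applies with any of them as divisor. I would then reduce the tails in a fixed order chosen so that no reduction re-inflates a tail already reduced. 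Concretely, in $A_1$ first write $p_1=a_1t+p_1^{\ast}$ with $p_1^{\ast}=0$ or $\deg p_1^{\ast}<\deg a_1$ and replace $A_1$ by $A_1-tA_2$: this leaves the $1$-component equal to $g$, turns $up_1$ into $up_1^{\ast}$, and affects only $q_1$ and $r_1$; next reduce the resulting $q_1$ modulo $a_2$ by subtracting a $\Z_p[x]$-multiple of $A_3$ (affecting only $r_1$); finally reduce $r_1$ modulo $a_3$ by subtracting a multiple of $A_4$. Applying the same scheme to $A_2$ (reduce $q_2$ via $A_3$, then $r_2$ via $A_4$) and to $A_3$ (reduce $r_3$ via $A_4$) produces a generating set of exactly the asserted shape, since every step only subtracts an element of $C$ and never disturbs the leading $1$- or $u$-parts of the generators.

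For uniqueness, suppose $\langle A_1,A_2,A_3,A_4\rangle=\langle A_1',A_2',A_3',A_4'\rangle=C$ with both families of the prescribed form, the second carrying data $g',a_1',a_2',a_3',p_1',q_1',r_1',q_2',r_2',r_3'$. Since $C_1,C_2,C_3,C_4$ are defined intrinsically from $C$ via residues and torsions, not from a choice of generators, and since every ideal of $\Z_p[x]/\langle x^n-1\rangle$ has a unique monic generator dividing $x^n-1$, we get $g=g'$, $a_1=a_1'$, $a_2=a_2'$, $a_3=a_3'$; in particular $A_4=A_4'$. Now $A_1-A_1'=u(p_1-p_1')+v(q_1-q_1')+uv(r_1-r_1')\in C$. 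Reducing modulo $v$ and invoking $C_2=\{f:uf\in C\bmod v\}=\langle a_1\rangle$ shows $p_1-p_1'\in\langle a_1\rangle$, hence (as $a_1\mid x^n-1$) $a_1\mid(p_1-p_1')$ in $\Z_p[x]$; since $\deg(p_1-p_1')<\deg a_1$, this forces $p_1=p_1'$. Then $A_1-A_1'=v(q_1-q_1')+uv(r_1-r_1')\in C$; reducing modulo $uv$ and using $C_3=\langle a_2\rangle$ forces $q_1=q_1'$ in the same way, after which $uv(r_1-r_1')\in C$ with $C_4=\langle a_3\rangle$ forces $r_1=r_1'$. The identical three-step peeling applied to $A_2-A_2'=v(q_2-q_2')+uv(r_2-r_2')\in C$ gives $q_2=q_2'$ and $r_2=r_2'$, and applied to $A_3-A_3'=uv(r_3-r_3')\in C$ gives $r_3=r_3'$.

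The only place that genuinely needs care is the normalisation step: one must fix the order of the reductions so that reducing a later tail via $A_2$, $A_3$ or $A_4$ does not re-introduce higher-degree terms into a tail already brought below the required bound, and one must check that subtracting multiples of $A_2,A_3,A_4$ never alters the $1$-component of $A_1$ nor the $u$-component of $A_2$. Once this bookkeeping is in place, uniqueness is essentially formal: $C_1,\dots,C_4$ are generator-free invariants of $C$, and the prescribed degree bounds are exactly the remainder bounds of the division algorithm, so any admissible difference of tails, lying in the relevant $C_i$, is forced to vanish.
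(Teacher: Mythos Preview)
Your proof is correct and follows essentially the same approach as the paper: reduce the tails $p_1,q_1,r_1,q_2,r_2,r_3$ in order via the division algorithm by subtracting suitable $\Z_p[x]$-multiples of $A_2,A_3,A_4$, and prove uniqueness by showing that each component of $A_i-A_i'$ lies in the appropriate $C_j$ and hence vanishes by the degree constraint. You are somewhat more careful than the paper in two places: you explicitly justify that the division algorithm applies (since $a_1,a_2,a_3$ divide $x^n-1$ and are therefore regular), and you explicitly argue that $g,a_1,a_2,a_3$ are intrinsic invariants of $C$ via $C_1,\dots,C_4$ before comparing tails, whereas the paper tacitly assumes the same $g$ in both candidate generators.
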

\begin{proof} We prove the degree results for the $p_1(x)$, $q_1(x)$ and $r_1(x)$ only. Others follow similarly. Let $A_1(x)\neq 0$ and $\text{deg}(p_1(x))\geq \text{deg}(a_1(x))$. Then by division algorithm, we have $p_1(x)=a_1(x)q(x)+r(x)$ where $\text{deg}(r(x))<\text{deg}(a_1(x))$ or $r(x)=0$. Now $A_1(x)-q(x)A_2(x)=g(x)+ur(x)+v(q_1(x)-q(x)q_2(x))+uv(r_1(x)-q(x)r_2(x))\in C$. Now if $\text{deg}(q_1(x)-q(x)q_2(x))\geq \text{deg}(a_2(x))$ then by division algorithm, we have $q_1(x)-q(x)q_2(x)=a_2(x)q'(x)+r'(x)$, where $\text{deg}(r'(x))<\text{deg}(a_2(x))$. Therefore, $A_1(x)-q(x)A_2(x)-q'(x)A_3(x)=g(x)+ur(x)+vr'(x)+uv(r_1(x)-q(x)r_2(x)-q'(x)r_3(x))\in C$. Again by division algorithm, we have $r_1(x)-q(x)r_2(x)-q'(x)r_3(x)=a_3(x)q''(x)+r''(x)$, where $\text{deg}(r''(x))<\text{deg}(a_3(x))$ and hence $A_1(x)-q(x)A_2(x)-q'(x)A_3(x)-q''(x)A_4(x)=g(x)+ur(x)+vr'(x)+uvr''(x)\in C$. Now this polynomial satisfies our required property and also the polynomial $A_1(x)$ can be replaced by this polynomial. Now we have 
to prove that the polynomials $A_i(x)$'s are unique. Here again, we prove the uniqueness only for polynomial $A_1(x)$. If possible, let $A_1(x)=g(x)+up_1(x)+vq_1(x)+uvr_1(x)$ and $B_1(x)=g(x)+up'_1(x)+vq'_1(x)+uvr'_1(x)$ are two polynomial with same property in $C$. Hence, $A_1(x)-B_1(x)=u(p_1(x)-p'_1(x))+v(q_1(x)-q'_1(x))+uv(r_1(x)-r'_1(x))\in C$. Since $p_1(x)-p'_1(x)\in C_2$ and $\text{deg}(p_1(x)-p'_1(x))<\text{deg}(a_1(x))$ so $p_1(x)-p'_1(x)=0$ or $p_1(x)=p'_1(x)$. Similarly we have $q_1(x)=q'_1(x)$ and $r_1(x)=r'_1(x)$. Therefore, $A_1(x)=B_1(x)$.
\end{proof}

\begin{prop} \label{gen-main}
Let $C=\langle g(x)+up_1(x)+vq_1(x)+uvr_1(x),ua_1(x)+vq_2(x)+uvr_2(x),va_2(x)+uvr_3(x),uva_3(x)\rangle$ be an ideal of the ring $R_{u^2,v^2,p,n}$. Then we must have
\begin{align}
&a_3(x)|g(x), a_3(x)|a_1(x), a_3(x)|a_2(x), a_2(x)|g(x), a_1(x)|g(x), g(x)|(x^n-1),\\
&a_1(x)|p_1(x)\left(\frac{x^n-1}{g(x)}\right),\\
&a_2(x)|\frac{x^n-1}{g(x)}\left(q_1(x)-\frac{p_1(x)}{a_1(x)}q_2(x)\right),\\
&a_2(x)|\frac{g(x)}{a_1(x)}q_2(x),\\
&a_3(x)|q_2(x),\\
&a_3(x)|\frac{x^n-1}{a_2(x)}r_3(x),\\
&a_3(x)|\frac{x^n-1}{a_1(x)}\left(r_2(x)-\frac{q_2(x)}{a_2(x)}r_3(x)\right),\\
&a_3(x)|\left(p_1(x)-\frac{g(x)}{a_2(x)}r_3(x)\right),\\
&a_3(x)|\left(q_1(x)-\frac{g(x)}{a_1(x)}r_2(x)+\frac{g(x)}{a_1(x)a_2(x)}q_2(x)r_3(x)\right),
\end{align}
\begin{align}
&a_3(x)|\frac{x^n-1}{g(x)}\left(r_1(x)-\frac{p_1(x)}{a_1(x)}r_2(x)+\frac{q_1(x)+\frac{p_1(x)}{a_1(x)}q_2(x)}{a_2(x)}r_3(x)\right).
\end{align}
\end{prop}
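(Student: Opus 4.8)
The plan is to derive every one of the stated divisibilities from a single principle: since $C$ is an ideal of $R_{u^2,v^2,p,n}$, we have $f(x)A_i(x)\in C$ for each generator $A_i(x)$ and every $f(x)$, and because $x^n\equiv 1$ and $u^2=v^2=0$ in this ring, a shrewd choice of $f$ collapses $A_i$ onto its lower ``layers''. Multiplying by $\frac{x^n-1}{g(x)}$, $\frac{x^n-1}{a_1(x)}$ or $\frac{x^n-1}{a_2(x)}$ kills the leading term (as $x^n-1\equiv 0$), and multiplying by $u$, $v$ or $uv$ kills lower layers (as $u^2=v^2=0$). I would then subtract suitable $\Z_p[x]$-multiples of the deeper generators $A_{i+1},\dots,A_4$ to peel off the remaining layers one at a time, ending with an element of $C$ of the form $u^{\varepsilon}v^{\delta}h(x)$ with $h(x)\in\Z_p[x]$; by the definitions of $C_1=\langle g\rangle$, $C_2=\langle a_1\rangle$, $C_3=\langle a_2\rangle$, $C_4=\langle a_3\rangle$ as principal ideals of $\Z_p[x]/\langle x^n-1\rangle$, such a membership is precisely one of $g\mid h$, $a_1\mid h$, $a_2\mid h$, $a_3\mid h$. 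Every displayed relation is an instance of this mechanism, so the whole proof amounts to choosing the right multiplier at each stage and carrying out the reductions in the right order -- the multiplier that clears one layer becomes a genuine polynomial only after the divisibility one layer above it has been established.

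First I would settle the relations on the first displayed line. From $uA_1=ug+uvq_1$, $vA_1=vg+uvp_1$ and $uvA_1=uvg$, all in $C$, we obtain (reading modulo $v$, modulo $uv$, and directly) $g\in C_2$, $g\in C_3$, $g\in C_4$, i.e.\ $a_1\mid g$, $a_2\mid g$, $a_3\mid g$; from $vA_2=uva_1\in C$ we get $a_1\in C_4$, i.e.\ $a_3\mid a_1$; from $uA_3=uva_2\in C$ we get $a_2\in C_4$, i.e.\ $a_3\mid a_2$; and $g\mid(x^n-1)$ because $g$ is, by construction, the monic divisor of $x^n-1$ generating the ideal $C_1$ of $\Z_p[x]/\langle x^n-1\rangle$.

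Next I would treat the relations carrying a factor $\frac{x^n-1}{(\cdot)}$. Since $g\mid(x^n-1)$, $\tfrac{x^n-1}{g}A_1=up_1\tfrac{x^n-1}{g}+vq_1\tfrac{x^n-1}{g}+uvr_1\tfrac{x^n-1}{g}\in C$, and reading modulo $v$ gives $p_1\tfrac{x^n-1}{g}\in C_2$, i.e.\ $a_1\mid p_1\tfrac{x^n-1}{g}$. Once this holds, $t(x):=\tfrac{p_1(x)}{a_1(x)}\cdot\tfrac{x^n-1}{g(x)}$ is a polynomial, $\tfrac{x^n-1}{g}A_1-tA_2\in C$ has vanishing $u$-layer, and reading it modulo $uv$ puts $\tfrac{x^n-1}{g}\bigl(q_1-\tfrac{p_1}{a_1}q_2\bigr)$ into $C_3$, which is the next divisibility. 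The remaining $a_3\mid(\cdot)$ relations come out the same way: $uA_2=uvq_2\in C$ gives $a_3\mid q_2$; $\tfrac{g}{a_1}A_2-uA_1\in C$ reduced modulo $uv$ gives $a_2\mid\tfrac{g}{a_1}q_2$; $\tfrac{x^n-1}{a_2}A_3=uvr_3\tfrac{x^n-1}{a_2}\in C$ gives $a_3\mid r_3\tfrac{x^n-1}{a_2}$; $\tfrac{x^n-1}{a_1}A_2$ minus the appropriate polynomial multiple of $A_3$ (available because $a_2\mid\tfrac{g}{a_1}q_2$) gives $a_3\mid\tfrac{x^n-1}{a_1}\bigl(r_2-\tfrac{q_2}{a_2}r_3\bigr)$; $\tfrac{g}{a_2}A_3-vA_1=uv\bigl(\tfrac{g}{a_2}r_3-p_1\bigr)\in C$ gives $a_3\mid\bigl(p_1-\tfrac{g}{a_2}r_3\bigr)$; and $uA_1-\tfrac{g}{a_1}A_2$ plus the multiple $\tfrac{g}{a_1a_2}q_2A_3$ gives $a_3\mid\bigl(q_1-\tfrac{g}{a_1}r_2+\tfrac{g}{a_1a_2}q_2r_3\bigr)$. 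Finally, the last displayed divisibility I would obtain by pushing the computation that produced $a_2\mid\tfrac{x^n-1}{g}\bigl(q_1-\tfrac{p_1}{a_1}q_2\bigr)$ one layer deeper: from $\tfrac{x^n-1}{g}A_1-tA_2\in C$ subtract $m(x)A_3(x)$ with $m(x)=\tfrac{x^n-1}{g}\cdot\tfrac{q_1-\frac{p_1}{a_1}q_2}{a_2}$ (a polynomial by that very relation) to clear the $v$-layer, and read off that the surviving $uv$-coefficient lies in $C_4=\langle a_3\rangle$.

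The main obstacle I anticipate is organisational rather than conceptual: one must keep the cascade of reductions consistent, so that every multiplier ($t$, $m$, and their analogues in the other cases) is already known to be a polynomial before it is used, and carry the nested coefficient expressions through two or three successive subtractions without arithmetic slips -- in particular tracking signs carefully, since we work in characteristic $p$ and the cancellations are not symmetric as they would be over $\F_2$.
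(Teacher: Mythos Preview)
Your proposal is correct and follows essentially the same route as the paper: for each condition you multiply a generator by $u$, $v$, or the appropriate cofactor $\frac{x^n-1}{(\cdot)}$ to annihilate its leading layer, then subtract polynomial multiples of the deeper generators to isolate an element of the form $u^{\varepsilon}v^{\delta}h(x)$, and read off membership in $C_2$, $C_3$, or $C_4$. The paper does exactly this, condition by condition, with the same multipliers you describe; your write-up is in fact more careful than the paper's in explicitly checking that each multiplier (your $t$, $m$, etc.) is already known to be a polynomial before it is used.
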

\begin{proof} 
Conditions given in $(1)$ is clear from the above discussion.
Condition $(2)$ follows from the facts that $\frac{x^n-1}{g(x)}(g(x)+up_1(x)+vq_1(x)+uvr_1(x))=u\frac{x^n-1}{g(x)}p_1(x)+v\frac{x^n-1}{g(x)}q_1(x)+uv\frac{x^n-1}{g(x)}r_1(x)\in C$. That is $u\frac{x^n-1}{g(x)}p_1(x)\in C ~\text{mod}~ v$. Which implies that $\frac{x^n-1}{g(x)}p_1(x)\in C_2$. So $a_1(x)|\frac{x^n-1}{g(x)}p_1(x)$.
For condition $(3)$, we can write $\frac{x^n-1}{g(x)}(g(x)+up_1(x)+vq_1(x)+uvr_1(x))-\frac{x^n-1}{g(x)}\frac{p_1(x)}{a_1(x)}(ua_1(x)+vq_2(x)+uvr_2(x))=v\frac{x^n-1}{g(x)}\big(q_1(x)-\frac{p_1(x)}{a_1(x)}q_2(x)\big)+uv\frac{x^n-1}{g(x)}\big(r_1(x)-\frac{p_1(x)}{a_1(x)}r_2(x)\big)\in C$. Now $v\frac{x^n-1}{g(x)}\big(q_1(x)-\frac{p_1(x)}{a_1(x)}q_2(x)\big)\in C ~\text{mod}~ uv$. So $\frac{x^n-1}{g(x)}\big(q_1(x)-\frac{p_1(x)}{a_1(x)}q_2(x)\big)\in C_3$. Hence, $a_2(x)|\frac{x^n-1}{g(x)}\big(q_1(x)-\frac{p_1(x)}{a_1(x)}q_2(x)\big)$.
For condition $(4)$, we have $u(g(x)+up_1(x)+vq_1(x)+uvr_1(x))-\frac{g(x)}{a_1(x)}(ua_1(x)+vq_2(x)+uvr_2(x))=-v\big(\frac{g(x)}{a_1(x)}q_2(x)\big)+uv\big(q_1(x)-\frac{g(x)}{a_1(x)}r_2(x)\big)\in C$. Now $-v\big(\frac{g(x)}{a_1(x)}q_2(x)\big)\in C ~\text{mod}~uv$. This gives $-\big(\frac{g(x)}{a_1(x)}q_2(x)\big)\in C_3$. So we have $a_2(x)|\frac{g(x)}{a_1(x)}q_2(x)$.
For condition $(5)$, we have $u(ua_1(x)+vq_2(x)+uvr_2(x))=uvq_2(x)\in C$. Therefore, $q_2(x)\in C_4$. Hence, $a_3(x)|q_2(x)$.
For condition $(6)$, we have $\frac{x^n-1}{a_2(x)}(va_2(x)+uvr_3(x))=uv\frac{x^n-1}{a_2(x)}r_3(x)\in C$. So $\frac{x^n-1}{a_2(x)}r_3(x)\in C_4$. Hence, $a_3(x)|\frac{x^n-1}{a_2(x)}r_3(x)\in C$.
Condition $(7)$ follows from the fact that $\frac{x^n-1}{a_1(x)}(ua_1(x)+vq_2(x)+uvr_2(x))-\frac{x^n-1}{a_1(x)}\frac{q_2(x)}{a_2(x)}(va_2(x)+uvr_3(x))=uv\frac{x^n-1}{a_1(x)}\big(r_2(x)-\frac{q_2(x)}{a_2(x)}r_3(x)\big)\in C$. So $\frac{x^n-1}{a_1(x)}\big(r_2(x)-\frac{q_2(x)}{a_2(x)}r_3(x)\big)\in C_4$. Hence, $a_3(x)|\frac{x^n-1}{a_1(x)}\big(r_2(x)-\frac{q_2(x)}{a_2(x)}r_3(x)\big)$.
For condition $(8)$, we have $v(g(x)+up_1(x)+vq_1(x)+uvr_1(x))-\frac{g(x)}{a_2(x)}(va_2(x)+uvr_3(x))=uv\big(p_1(x)-\frac{g(x)}{a_2(x)}r_3(x)\big)\in C$. So $\big(p_1(x)-\frac{g(x)}{a_2(x)}r_3(x)\big)\in C_4$. Therefore, $a_3(x)|\big(p_1(x)-\frac{g(x)}{a_2(x)}r_3(x)\big)$.
For condition $(9)$, we can write $u(g(x)+up_1(x)+vq_1(x)+uvr_1(x))-\frac{g(x)}{a_1(x)}(ua_1(x)+vq_2(x)+uvr_2(x))+\frac{g(x)q_2(x)}{a_1(x)a_2(x)}(va_2(x)+uvr_3(x))=uv\big(q_1(x)-\frac{g(x)}{a_1(x)}r_2(x)+\frac{g(x)}{a_1(x)a_2(x)}q_2(x)r_3(x)\big)\in C$. This gives $\big(q_1(x)-\frac{g(x)}{a_1(x)}r_2(x)+\frac{g(x)}{a_1(x)a_2(x)}q_2(x)r_3(x)\big)\in C_4$. Hence, $a_3(x)|\big(q_1(x)-\frac{g(x)}{a_1(x)}r_2(x)+\frac{g(x)}{a_1(x)a_2(x)}q_2(x)r_3(x)\big)$.
Finally for condition $(10)$, we can write $\frac{x^n-1}{g(x)}(g(x)+up_1(x)+vq_1(x)+uvr_1(x))-\frac{x^n-1}{g(x)}\frac{p_1(x)}{a_1(x)}(ua_1(x)+vq_2(x)+uvr_2(x))+\frac{x^n-1}{g(x)}\frac{q_1(x)+\frac{p_1(x)}{a_1(x)}q_2(x)}{a_2(x)}(va_2(x)+uvr_3(x))=uv\frac{x^n-1}{g(x)}\big(r_1(x)-\frac{p_1(x)}{a_1(x)}r_2(x)+\frac{q_1(x)+\frac{p_1(x)}{a_1(x)}q_2(x)}{a_2(x)}r_3(x)\big)\in C$. This implies that $\frac{x^n-1}{g(x)}\big(r_1(x)-\frac{p_1(x)}{a_1(x)}r_2(x)+\frac{q_1(x)+\frac{p_1(x)}{a_1(x)}q_2(x)}{a_2(x)}r_3(x)\big) \in C_4$. Hence, we have $a_3(x)|\frac{x^n-1}{g(x)}\big(r_1(x)-\frac{p_1(x)}{a_1(x)}r_2(x)+\frac{q_1(x)+\frac{p_1(x)}{a_1(x)}q_2(x)}{a_2(x)}r_3(x)\big)$.
\end{proof}
The following theorem characterizes the free cyclic codes over  $R_{u^2,v^2,p}$.
\begin{prop} \label{freecode}
If $C=\langle g(x)+up_1(x)+vq_1(x)+uvr_1(x),ua_1(x)+vq_2(x)+uvr_2(x),va_2(x)+uvr_3(x),uva_3(x)\rangle$ be a cyclic code over the ring $R_{u^2,v^2,p}$, then $C$ is a free cyclic code if and only if $g(x)=a_3(x)$. In this case, we have $C=\langle g(x)+up_1(x)+vq_1(x)+uvr_1(x)\rangle$ and $(g(x)+up_1(x)+vq_1(x)+uvr_1(x))|(x^n-1)$ in $R_{u^2,v^2,p}$.
\end{prop}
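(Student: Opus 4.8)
The plan is to prove the two implications separately. The converse---if $C$ is free then $g=a_3$---is a short dimension count; the forward implication carries the real content, culminating in an explicit divisor of $x^n-1$.

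Suppose first that $C$ is free, of rank $k$. I would use the filtration $R_{u^2,v^2,p}\supset\langle u,v\rangle\supset\langle uv\rangle\supset 0$. The ideal $\langle uv\rangle$ is annihilated by the maximal ideal $\langle u,v\rangle$, so $\langle uv\rangle\cong\F_p$ as a module, whence $uvC=\langle uv\rangle C\cong\F_p^{\,k}$. On the other hand, multiplication by $uv$ is an $\F_p$-linear injection $\Z_p[x]/\langle x^n-1\rangle\hookrightarrow R_{u^2,v^2,p,n}$ restricting to a bijection of $C_1=\langle g(x)\rangle$ onto $uvC$, so $\dim_{\F_p}uvC=n-\deg g$ and therefore $k=n-\deg g$. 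Comparing $|C|=p^{4k}$ with $|C|=|\mathrm{Res}(C)|\,|\mathrm{Tor}(C)|=p^{\,4n-\deg g-\deg a_1-\deg a_2-\deg a_3}$---which follows from the exact sequence $0\to v\,\mathrm{Tor}(C)\to C\to\mathrm{Res}(C)\to 0$ and the sizes of cyclic codes over $R_{u^2,p}$ in \cite{Aks-Pkk13}---gives $\deg a_1+\deg a_2+\deg a_3=3\deg g$; since $a_3\mid a_1\mid g$ and $a_3\mid a_2\mid g$ are monic, this forces $a_1=a_2=a_3=g$, in particular $a_3=g$.

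Now suppose $g=a_3$, so $a_1=a_2=a_3=g$ as above. The relation $a_3\mid q_2$ from Proposition~\ref{gen-main} combined with $\deg q_2<\deg a_2=\deg g$ (the degree bounds of the uniqueness theorem) gives $q_2=0$; the relations $a_3\mid(p_1-\tfrac{g}{a_2}r_3)$ and $a_3\mid(q_1-\tfrac{g}{a_1}r_2+\tfrac{g}{a_1a_2}q_2r_3)$ then reduce to $g\mid(p_1-r_3)$ and $g\mid(q_1-r_2)$, forcing $r_3=p_1$ and $r_2=q_1$ by the degree bounds. A direct computation using $u^2=v^2=0$ and $uv=vu$ shows $A_2=uA_1$, $A_3=vA_1$, $A_4=uvA_1$, hence $C=\langle A_1\rangle$. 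Writing $x^n-1=g(x)h(x)$ in $\Z_p[x]$, I would unpack the torsion $C_4=\{f(x):uvf(x)\in\langle A_1\rangle\}$; using $g\mid hp_1$ and $g\mid hq_1$ (the relations labelled (2) and (3) in Proposition~\ref{gen-main}), one finds that $C_4$ is generated modulo $x^n-1$ by $g$, $hp_1$, $hq_1$ and $hr_1-\tfrac{2p_1q_1h}{g}$ (the last a polynomial, since $\tfrac{p_1h}{g}\in\Z_p[x]$). As $C_4=\langle a_3\rangle=\langle g\rangle$ and the first three generators lie in $\langle g\rangle$, we get $g\mid\bigl(hr_1-\tfrac{2p_1q_1h}{g}\bigr)$. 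Setting $\alpha=-\tfrac{p_1h}{g}$, $\beta=-\tfrac{q_1h}{g}$, $\gamma=-\tfrac1g\bigl(hr_1-\tfrac{2p_1q_1h}{g}\bigr)$---all in $\Z_p[x]$---a short multiplication shows that the coefficients of $u$, $v$ and $uv$ in $A_1(h+u\alpha+v\beta+uv\gamma)$ all vanish, so $A_1(h+u\alpha+v\beta+uv\gamma)=gh=x^n-1$. Thus $A_1\mid(x^n-1)$ in $R_{u^2,v^2,p}[x]$, and since $A_1$ is regular of degree $\deg g$, the division algorithm (Proposition~\ref{division-alg}) makes $\{x^iA_1:0\le i<n-\deg g\}$ an $R_{u^2,v^2,p}$-basis of $C$; hence $C$ is free, with $C=\langle A_1\rangle$ and $A_1\mid(x^n-1)$ as claimed.

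The step I expect to be the main obstacle is $C=\langle A_1\rangle\Rightarrow A_1\mid(x^n-1)$ in $R_{u^2,v^2,p}[x]$: the relations (1)--(10) of Proposition~\ref{gen-main} do not by themselves deliver the divisibility $g\mid\bigl(hr_1-\tfrac{2p_1q_1h}{g}\bigr)$ needed to form the cofactor of $A_1$---it is precisely the hypothesis $a_3=g$, read off from the structure of $C_4$, that supplies it. (When $p=2$ the term $\tfrac{2p_1q_1h}{g}$ vanishes, so this point does not surface in the binary setting.)
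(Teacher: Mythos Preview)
Your proof is correct and follows a genuinely different route from the paper's. For the implication ``free $\Rightarrow g=a_3$'' the paper asserts, with little justification, that a free cyclic code must be principal and then writes $uva_3(x)=uv\,\alpha\, g(x)$ with $\alpha\in\Z_p$ a \emph{constant}, comparing coefficients; your dimension count via $uvC\cong\F_p^{\,k}$ together with the cardinality formula coming from the $\mathrm{Res}/\mathrm{Tor}$ filtration is both cleaner and fully rigorous. For ``$g=a_3\Rightarrow$ free'', the paper first cuts the generating set down to $\{A_1,A_3\}$ by quoting \cite[Lemma~3.1]{Aks-Pkk13}, uses only relation~(8) to obtain $p_1=r_3$ and hence $A_3=vA_1$, and then derives $A_1\mid(x^n-1)$ by the division algorithm plus a minimality-of-degree argument. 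You instead stay internal to the present paper: relations~(5),~(8),~(9) give $q_2=0$, $r_3=p_1$, $r_2=q_1$, so each $A_i$ is visibly a multiple of $A_1$; you then \emph{construct} the cofactor of $A_1$ in $x^n-1$ explicitly, the crucial divisibility $g\mid\bigl(hr_1-\tfrac{2p_1q_1h}{g}\bigr)$ being extracted from the hypothesis $C_4=\langle g\rangle$. The paper's route is shorter but leans on an external lemma and a somewhat loosely stated minimality step; yours is longer but self-contained and constructive, and it makes explicit exactly where the assumption $a_3=g$ enters the factorisation of $x^n-1$.
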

\begin{proof} Let $g(x)=a_3(x)$. Since $a_3(x)|a_1(x)|g(x)$ and $a_3(x)|a_2(x)|g(x)$, we get $g(x)=a_1(x)=a_2(x)=a_3(x)$. Here, we have Im$\phi$ = $\langle g(x)+up_1(x), ua_1(x)\rangle$ and Ker$\phi$ = $ v\langle a_2(x)+ ur_3(x), ua_3(x) \rangle$.  From \cite[Lemma 3.1]{Aks-Pkk13}, we get Im$\phi$ = $\langle g(x)+up_1(x)\rangle$ and Ker$\phi$ = $ v\langle a_2(x)+ ur_3(x) \rangle$. Therefore, we have $C=\langle g(x)+up_1(x)+vq_1(x)+uvr_1(x), va_2(x)+uvr_3(x)\rangle$. Note that deg$(p_1(x))$, deg$(r_3(x))$ $<$ deg$(g(x))$. From Condition (8), we have $a_3(x)|(p_1(x)-r_3(x)),$ this gives $p_1(x) = r_3(x)$. So $ v(g(x)+up_1(x)+vq_1(x)+uvr_1(x)) = vg(x)+uvp_1(x) = va_2(x)+uvr_3(x).$ Therefore, we have $C=\langle g(x)+up_1(x)+vq_1(x)+uvr_1(x)\rangle$ and $C \simeq R_{u^2,v^2,p}^{n-{\rm deg}(g_1(x))}$  Hence, $C$ is a free cyclic code. Conversely, if $C$ is a free cyclic code, we must have $C =\langle g(x)+up_1(x)+vq_1(x)+uvr_1(x)\rangle$. Since $uv a_3(x) \in C$, we have $uva_3(x) = uv \alpha g(x)$ for some $\
\alpha \in \Z_p$. Note that $a_3(x)|g(x)$, hence by comparing the coefficients both sides, we get $g(x) = a_3(x)$. For the second condition, by division algorithm, we have $x^n-1=(g(x)+up_1(x)+vq_1(x)+uvr_1(x))q(x)+r(x)$, where $r(x)=0$ or $\text{deg}(r(x))<\text{deg}(g(x))$. This implies that $r(x)=(x^n-1)-(g(x)+up_1(x)+vq_1(x)+uvr_1(x))q(x)\in C$. Note that $g(x)+up_1(x)+vq_1(x)+uvr_1(x)$ is the lowest degree polynomial in $C$. So $r(x)=0$. Hence, $(g(x)+up_1(x)+vq_1(x)+uvr_1(x))|(x^n-1)$ in $R_{u^2,v^2,p}$.
\end{proof}
Note that we get the simpler form for the generators of the cyclic code over $R_{u^2,v^2,p}$, like in the above theorem, if we have $g(x) = a_1(x), a_2(x)$ or $a_3(x) = a_1(x), a_2(x)$.
\subsection{\rm If $n$ is relatively prime to $p$} $ $

Let $n$ be relatively prime to $p$. If $C$ is a cyclic code of length $n$ over the ring $R_{u^2,v^2,p}$, then from \cite{Aks-Pkk13}, we have Im$(\phi)$ = $\langle g(x) + ua_1(x) \rangle $and ker$\phi$ = $ v\langle a_2(x)+ ua_3(x) \rangle$
with $a_1(x)|g(x)|(x^n-1)$ and $ a_3(x)|a_2(x)|(x^n-1).$  Now combining these two we can write
$C  = \langle g_1(x) + ua_1 (x) + vq_1(x)+uvr_1(x), va_2 (x) + uva_3(x) \rangle$
with the same conditions as above on $g(x)$ and $a_i(x)$.
In order to get conditions on $q_1(x)$ and $r_1(x)$, we will define a homomorphism $\phi_v : R_{u^2,v^2,p} \rightarrow R_{v^2,p}$ as 
\[ \phi_v(a+bu+cv+duv) = a+cv.\]
Now with $C$ as above, we get $\phi_v(C) = \langle g_1(x) + vq_1(x), va_2 (x) \rangle.$ Note that $\phi_v(C)$ is a cyclic code of length $n$ over $R_{v^2,p}$. Therefore, from \cite{Aks-Pkk13}, we have $a_2(x)|g_1(x)|(x^n-1)$, $a_2(x)|q_1(x)\dfrac{x^n-1}{a_1(x)}$ and deg$(q_1(x)) <$ deg$(a_2(x))$. Since $n$ is relatively prime to $p$, $x^n-1$ can be uniquely factored as product of distinct irreducible factors. Therefore, we must have g.c.d.$\left(a_2(x), \dfrac{x^n-1}{a_1(x)}\right) = 1$. But we have $a_2(x)|q_1(x)\dfrac{x^n-1}{a_1(x)}$, this gives $a_2(x)|q_1(x)$. Since deg$(q_1(x)) <$ deg$(a_2(x))$, this gives $q_1(x) = 0$. Thus we have proved the following theorem.
\begin{theo}
Let $C$ be a cyclic code over the ring $R_{u^2,v^2,p}$ of length $n$. If $n$ is relatively prime to $p$, then we have $C=\langle g(x)+ua_1(x)+uvr_1(x), va_2(x)+uva_3(x)\rangle$ with $a_1(x)|g(x)|(x^n-1)$ and $ a_3(x)|a_2(x)|g(x)|(x^n-1).$
\end{theo}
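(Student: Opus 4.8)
The plan is to build on the structure theory of cyclic codes over the smaller ring $R_{u^2,p}$ established in \cite{Aks-Pkk13}, using crucially that for $\gcd(n,p)=1$ the polynomial $x^n-1$ is squarefree in $\Z_p[x]$. That squarefreeness is exactly what collapses the general four-generator description of Section \ref{generator} down to the two-generator form in the statement.

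First I would apply the homomorphism $\phi\colon C\to R_{u^2,p,n}$ of Section \ref{generator} (reduction modulo $v$). Its image $\mathrm{Im}\,\phi$ and the ideal $J$ controlling its kernel are cyclic codes over $R_{u^2,p}$ of length $n$, so since $\gcd(n,p)=1$, \cite{Aks-Pkk13} puts them in the one-generator shapes $\mathrm{Im}\,\phi=\langle g(x)+ua_1(x)\rangle$ with $a_1(x)\mid g(x)\mid(x^n-1)$ and $\ker\phi=v\langle a_2(x)+ua_3(x)\rangle$ with $a_3(x)\mid a_2(x)\mid(x^n-1)$. Lifting a generator of $\mathrm{Im}\,\phi$ to $R_{u^2,v^2,p,n}$ and adjoining the generator of $\ker\phi$ then presents $C$ as $\langle g(x)+ua_1(x)+vq_1(x)+uvr_1(x),\ va_2(x)+uva_3(x)\rangle$ for suitable $q_1(x),r_1(x)\in\Z_p[x]$; at this point $a_1(x)\mid g(x)\mid(x^n-1)$ and $a_3(x)\mid a_2(x)\mid(x^n-1)$ are already in hand, and what remains is to produce the extra relation $a_2(x)\mid g(x)$ and to force $q_1(x)=0$.

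Both of these come from a second reduction. Introduce $\phi_v\colon R_{u^2,v^2,p}\to R_{v^2,p}$, $\phi_v(a+bu+cv+duv)=a+cv$ (reduction modulo $u$); since $R_{v^2,p}\cong R_{u^2,p}$, the results of \cite{Aks-Pkk13} apply to it verbatim. As $\phi_v$ is a ring homomorphism, $\phi_v(C)$ is the cyclic code $\langle g(x)+vq_1(x),\ va_2(x)\rangle$ over $R_{v^2,p}$, and the structure theorem forces $a_2(x)\mid g(x)\mid(x^n-1)$, $\deg q_1(x)<\deg a_2(x)$, and $a_2(x)\mid q_1(x)\,\frac{x^n-1}{g(x)}$. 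Now $a_2(x)\mid g(x)\mid(x^n-1)$ together with squarefreeness of $x^n-1$ gives $\gcd\!\left(a_2(x),\tfrac{x^n-1}{g(x)}\right)=1$, hence $a_2(x)\mid q_1(x)$; combined with the degree bound this forces $q_1(x)=0$. Substituting back yields $C=\langle g(x)+ua_1(x)+uvr_1(x),\ va_2(x)+uva_3(x)\rangle$ with $a_1(x)\mid g(x)\mid(x^n-1)$ and $a_3(x)\mid a_2(x)\mid g(x)\mid(x^n-1)$, which is the assertion.

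The main obstacle is bookkeeping rather than anything conceptual. One has to check that the pullback really yields a two-generator code with generators of exactly the stated shapes — with reduced representatives in the claimed degree ranges and no spurious extra generator surviving — and that the second application of the $R_{u^2,p}$-structure theorem, now to $\phi_v(C)$, really does deliver the missing divisibility $a_2(x)\mid g(x)$ simultaneously with the relation $a_2(x)\mid q_1(x)\,\frac{x^n-1}{g(x)}$ that the coprimality-plus-degree argument converts into $q_1(x)=0$. Once those two invocations of \cite{Aks-Pkk13} are set up carefully, the only genuinely new ingredient is the short squarefreeness argument eliminating $q_1(x)$.
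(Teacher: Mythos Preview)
Your argument is correct and follows essentially the same route as the paper: apply $\phi$ to get the two-generator lift, then apply $\phi_v$ and invoke \cite{Aks-Pkk13} a second time to obtain $a_2(x)\mid g(x)$ together with the divisibility and degree constraints on $q_1(x)$, and finish with the squarefreeness argument to force $q_1(x)=0$. In fact your version is slightly cleaner: where the paper writes $a_2(x)\mid q_1(x)\,\tfrac{x^n-1}{a_1(x)}$ (a typo, since $a_1$ does not appear in $\phi_v(C)$), you correctly have $a_2(x)\mid q_1(x)\,\tfrac{x^n-1}{g(x)}$, which is what the coprimality step actually needs.
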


\section{Ranks and minimal spanning sets}\label{rank}
In this section, we find the rank and minimal spanning set of cyclic codes over the ring $R_{u^2,v^2,p}$. Following Dougherty and Shiromoto \cite[page 401]{Dou-Shiro01}, we define the rank of the code $C$ by the minimum number of generators of $C$ and define the free rank of $C$ by the maximum of the ranks
of $R_{u^2,v^2,p}$-free submodules of $C$.

\begin{theo} \label{rank-main}
Let $n$ be a positive integer not relatively prime to $p$ and $C$ be a cyclic code of length $n$ over $R_{u^2,v^2,p}$. If $C=\langle g(x)+up_1(x)+vq_1(x)+uvr_1(x),ua_1(x)+vq_2(x)+uvr_2(x),va_2(x)+uvr_3(x),uva_3(x)\rangle$ with $t ={\rm deg}(g(x))$, $ t_1 = {\rm deg}(a_1(x))$, $t_2 = {\rm deg}(a_2(x))$, $ t' = {\rm min}\{{\rm deg}(a_1(x)),{\rm deg}(a_2(x))\}$ and $t_3 = {\rm deg}(a_3(x))$, then $C$ has free rank $n-t$, rank $n+t+t'-t_1-t_2-t_3$ and the minimal spanning set  $B=\{g(x)+up_1(x)+vq_1(x)+uvr_1(x), x(g(x)+up_1(x)+vq_1(x)+uvr_1(x)),$ $\cdots, x^{n-t-1}(g(x)+up_1(x)+vq_1(x)+uvr_1(x)), ua_1(x)+vq_2(x)+uvr_2(x), x(ua_1(x)+vq_2(x)+uvr_2(x)),\cdots, x^{t-t_1-1}$ $(ua_1(x)+vq_2(x)+uvr_2(x)), $ $va_2(x)+uvr_3(x), x(va_2(x)+uvr_3(x)), $ $\cdots, x^{t-t_2-1}$ $(va_2(x)+uvr_3(x)), uva_3(x),\\ x(uva_3(x)),\cdots,  x^{t'-t_3-1}(uva_3(x))\}$. Furthermore, if $q_2(x) \neq 0$, we have $|C| = p^{4n+t+t'-3t_1-2t_2-t_3}$ and if $ q_2(x) = 0$, we have $|C| = p^{4n+t'-2t_1-2t_2-t_3}$.
\end{theo}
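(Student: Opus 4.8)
The plan is to prove the four assertions of the theorem — that $B$ spans $C$, that the free rank is $n-t$, that the rank is $n+t+t'-t_1-t_2-t_3$, and the two cardinality formulas — in that order, following the pattern of \cite{Tah-Siap07} and \cite{Aks-Pkk13}.

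First I would show that $B$ generates $C$ as an $R_{u^2,v^2,p}$-module. Given an arbitrary element $c(x)=f_1(x)A_1(x)+f_2(x)A_2(x)+f_3(x)A_3(x)+f_4(x)A_4(x)$ of $C$, I reduce the coordinates $f_j(x)$ one at a time by the division algorithm of Proposition~\ref{division-alg}, which applies because $x^n-1$, $g(x)$, the $a_i(x)$, and the quotients $\frac{x^n-1}{g(x)},\frac{g(x)}{a_1(x)},\frac{g(x)}{a_2(x)},\frac{a_1(x)}{a_3(x)},\frac{a_2(x)}{a_3(x)}$ are all regular in $R_{u^2,v^2,p}[x]$ (their leading parts are monic over $\Z_p$; cf. the remark following Proposition~\ref{regular-poly}). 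Dividing $f_1$ by $\frac{x^n-1}{g}$ lets me assume $\deg f_1<n-t$; the subtracted multiple involves $\frac{x^n-1}{g}A_1(x)\equiv u\,\frac{x^n-1}{g}p_1+v\,\frac{x^n-1}{g}q_1+uv\,\frac{x^n-1}{g}r_1\pmod{x^n-1}$, which has zero residue and, by conditions (2), (3) and (10) of Proposition~\ref{gen-main}, is an $R_{u^2,v^2,p,n}$-combination of $A_2,A_3,A_4$. Dividing $f_2$ by $\frac{g}{a_1}$ gives $\deg f_2<t-t_1$, with $\frac{g}{a_1}A_2(x)$ absorbed into the $A_1,A_3,A_4$ part by conditions (4) and (9); dividing $f_3$ by $\frac{g}{a_2}$ gives $\deg f_3<t-t_2$, by conditions (6) and (8); and dividing $f_4$ by whichever of $\frac{a_1}{a_3},\frac{a_2}{a_3}$ has the smaller degree, namely $t'-t_3$, settles the $uv$-layer — this is where $t'=\min\{t_1,t_2\}$ is forced, since the identities $vA_2(x)=uv\,a_1(x)$ and $uA_3(x)=uv\,a_2(x)$ make $x^j(uv\,a_3(x))$ reducible once $j\ge t'-t_3$. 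Iterating these reductions (the relevant degrees strictly decrease) rewrites $c(x)$ as an $R_{u^2,v^2,p}$-combination of the elements listed in $B$.

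Next, the two ranks. For the free rank: if $F\cong R_{u^2,v^2,p}^{\,k}$ is a free submodule of $C$, then reducing modulo the maximal ideal $\langle u,v\rangle$ sends a free basis of $F$ to $k$ elements of $C_1=\langle g(x)\rangle$ that remain $\Z_p$-linearly independent — over the finite local ring $R_{u^2,v^2,p}$, $R$-linear independence is inherited by the residue field, as one sees by multiplying a putative dependence relation by the socle generator $uv$ — so $k\le\dim_{\Z_p}C_1=n-t$; conversely $\{A_1(x),xA_1(x),\dots,x^{n-t-1}A_1(x)\}$ is $R$-linearly independent because the unit leading coefficient of $g(x)$ survives in each $x^iA_1(x)$ with $i<n-t$, so it spans a free submodule of rank $n-t$, and the free rank is $n-t$. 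For the module rank, by Nakayama's lemma the minimal number of generators of $C$ equals $\dim_{\Z_p}\bigl(C/\langle u,v\rangle C\bigr)$; a direct computation of $\langle u,v\rangle C=uC+vC$ shows the images of the elements of $B$ form a $\Z_p$-basis of this quotient, whence the rank is $|B|=(n-t)+(t-t_1)+(t-t_2)+(t'-t_3)=n+t+t'-t_1-t_2-t_3$, and this also confirms that $B$ is a minimal spanning set.

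Finally, the cardinality. I would assemble a $\Z_p$-generating set of $C$ from the vectors $u^{a}v^{b}x^{i}A_j(x)$ (with $a,b\in\{0,1\}$, $j\in\{1,2,3,4\}$, and $i$ ranging over the block of $A_j$ in $B$), delete the redundancies forced by $u^2=v^2=0$ and by the relations of Proposition~\ref{gen-main}, and count the survivors; equivalently one can use the exact sequence $0\to v\,\text{Tor}(C)\to C\to\text{Res}(C)\to 0$ together with the known cardinalities of cyclic codes over $R_{u^2,p}$ from \cite{Aks-Pkk13}. The dichotomy on $q_2(x)$ enters exactly here: when $q_2(x)=0$ one has $A_2(x)=ua_1(x)+uvr_2(x)\in uR_{u^2,v^2,p,n}$, so the block $\{x^iA_2(x)\}$ produces no new ``$v$-direction'' and the dimension count drops by $t-t_1$, which is precisely the gap between the two displayed exponents. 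I expect this last computation to be the main obstacle: the spanning and rank statements are fairly mechanical once Proposition~\ref{gen-main} is available, but fixing $\dim_{\Z_p}C$ requires verifying that, beyond the $q_2$-dichotomy, there is no further collapse among the contributions of the four interlocking families $A_1,A_2,A_3,A_4$ — this is where all of conditions (1)--(10) must be used simultaneously, and it is the step most sensitive to overlooked overlaps in the $uv$-layer.
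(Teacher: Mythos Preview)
Your plan is sound and reaches the same conclusions, but the route differs from the paper's in two notable ways.

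For the spanning assertion, the paper does not reduce an arbitrary element of $C$ directly; instead it starts from the evident larger generating set $B'$ (all shifts $x^iA_j$ with $i$ running up to $n-\deg A_j-1$) and shows by hand that each ``extra'' shift, such as $uv\,x^{t_2-t_3}a_3(x)$ or $x^{t-t_1}A_2(x)$, lies in the $R$-span of $B$. It distinguishes the case $t'=t_2$, where the division algorithm with the regular divisor $a_2(x)+ur_3(x)$ works, from the case $t'=t_1$, where $A_2$ is not regular and one must use the factorization $a_1=qa_3$ together with $vA_2=uv\,a_1$ explicitly. Your approach --- reducing $f_1,\dots,f_4$ in turn via Proposition~\ref{division-alg} and absorbing the overflow with conditions (2)--(10) of Proposition~\ref{gen-main} --- is cleaner and more systematic, and it makes transparent why \emph{every} divisibility condition in that proposition is needed.

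For minimality, the paper argues by a direct degree contradiction: it writes out a hypothetical $R$-linear relation expressing, say, $x^{t-t_1-1}A_2(x)$ in terms of the remaining elements of $B$, expands all scalars as $\alpha_{ij}+u\beta_{ij}+v\gamma_{ij}+uv\delta_{ij}$, and compares degrees in the $u$-layer to force a contradiction. Your use of Nakayama's lemma, computing $\dim_{\Z_p}(C/\langle u,v\rangle C)$, is more structural and avoids these case-by-case expansions; it also has the advantage of yielding the free-rank statement from the same circle of ideas.

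Finally, the paper's proof does not actually carry out the cardinality computation; the formula for $|C|$ is stated but left implicit. Your exact-sequence approach via $0\to v\,\mathrm{Tor}(C)\to C\to\mathrm{Res}(C)\to 0$ and the $R_{u^2,p}$-results of \cite{Aks-Pkk13} is the natural way to fill this in, and your identification of the $q_2$-dichotomy as the place where the two exponents diverge is exactly right.
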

\begin{proof}
Let $ t' = \text{deg}a_2(x).$  It is suffices to show that $B$ spans the set $B'=\{g(x)+up_1(x)+vq_1(x)+uvr_1(x), x(g(x)+up_1(x)+vq_1(x)+uvr_1(x)),$ $ \cdots , x^{n-t-1}\\(g(x)+up_1(x)+vq_1(x)+uvr_1(x)), ua_1(x)+vq_2(x)+uvr_2(x), x(ua_1(x)+vq_2(x)+uvr_2(x)), $ $\cdots , x^{n-t_1-1}(ua_1(x)+vq_2(x)+uvr_2(x)), va_2(x)+uvr_3(x), $ $x(va_2(x)+uvr_3(x)), \cdots , x^{n-t_2-1}(va_2(x)+uvr_3(x)), uva_3(x), x(uva_3(x)), \cdots, $ $ x^{n-t_3-1}(uva_3(x))\}$. First we show that $uvx^{t_2-t_3}(a_3(x)) \in B$. Let the leading coefficient of $x^{t_2-t_3}(a_3(x))$ be $\alpha_3$ and of $a_2(x)+ur_3(x)$ be $\alpha_2$. Then there exist a constant $c_0\in \Z_p$ such that $\alpha_3=c_0\alpha_2$. By the division algorithm, we have $uvx^{t_2-t_3}(a_3(x))=uvc_0(a_2(x)+ur_3(x))+uvm(x)$, where $uvm(x)$ is a polynomial in $C$ such that $\text{deg}(m(x))<t_2$. Now $\text{deg}(m(x))\geq t_3$ because $a_3(x)$ is minimum degree polynomial such that $uva_3(x)\in C$. So $t_3\leq \text{deg}(m(x)<t_2$ and $uvm(x)=\alpha_0uva_3(x)+\alpha_1uva_
3(x)+\cdots +\alpha_{t_2-t_3-1}uva_3(x)$. Thus $uvx^{t_2-t_3}(a_3(x))\in B$. Let $ t' =\text{deg}a_1(x).$  Since $(ua_1(x)+vq_2(x)+uvr_2(x))$ is not regular, we can not take $ua_1(x)+vq_2(x)+uvr_2(x)$ as divisor in the division algorithm. Here, by direct computation, we show that $uvx^{t_1-t_3}(a_3(x)) \in B$. Since $a_3(x)|a_1(x)$, we have $a_1(x)=q(x)a_3(x)$ for some $q(x) \in \Z_p[x]$. This we can write as $ a_1(x) = a_3(x)(q_0+q_1x +\cdots +q_{t_1-t_3}x^{t_1-t_3})$. Clearly, $q_{t_1-t_3} \neq 0$ (by degree comparison). Thus, we have $ uva_3(x)x^{t_1-t_3} $ = $ uvq_{t_1-t_3}^{-1}a_1(x)-uva_3(x)(q_0+q_1x +\cdots +q_{t_1-t_3-1}x^{t_1-t_3-1})$. Note that $v(ua_1(x)+vq_2(x)+uvr_2(x))= uva_1(x),$ thus, $uva_3x^{t_1-t_3} \in B$. By taking $g(x)+up_1(x)+vq_1(x)+uvr_1(x)$ as divisor and applying the division algorithm for $ua_1(x)+vq_2(x)+uvr_2(x)$ and $ va_2(x)+uvr_3(x)$, we can show that $x^{t-t_1}(ua_1(x)+vq_2(x)+uvr_2(x)), x^{t-t_2}(va_2(x)+uvr_3(x)) \in B $. Similarly others also can be shown. So $B$ is a 
generating set. It is easy to see that the elements $x^{i-t-1}(g(x)+up_1(x)+vq_1(x)+uvr_1(x)), 1 \leq i \leq n,~ x^{j-t_1-1}(ua_1(x)+vq_2(x)+uvr_2(x)), 1 \leq j \leq t,~ x^{k-t_2-1}(va_2(x)+uvr_3(x)), ~1 \leq k \leq t$ and $x^{l-t_3-1}uv(a_3(x)), 1 \leq l \leq t'$ can not be written as the linear combination of its preceding elements and other elements in the set $B$.
Here we will only show that $ x^{t-t_1-1}(ua_1(x)+vq_2(x)+uvr_2(x))$ can not be written as linear combinations of $g(x)+up_1(x)+vq_1(x)+uvr_1(x), x(g(x)+up_1(x)+vq_1(x)+uvr_1(x)),$ $\cdots, x^{n-t-1}(g(x)+up_1(x)+vq_1(x)+uvr_1(x)), ua_1(x)+vq_2(x)+uvr_2(x), x(ua_1(x)+vq_2(x)+uvr_2(x)),\cdots,$ $ x^{t-t_1-2}(ua_1(x)+vq_2(x)+uvr_2(x)), $ $va_2(x)+uvr_3(x), x(va_2(x)+uvr_3(x)), $ $\cdots, x^{t-t_2-1}(va_2(x)+uvr_3(x)), uva_3(x),x(uva_3(x)),$ $\cdots, x^{t'-t_3-1}(uva_3(x))$. The proof is similar for the rest. Suppose $x^{t-t_1-1}(ua_1(x)+vq_2(x)+uvr_2(x))$ can be written as linear combinations of the above elements. Then we have 
$x^{t-t_1-1}(ua_1(x)+vq_2(x)+vur_2(x))=\alpha_1(g(x)+up_1(x)+vq_1(x)+vur_1(x))+\alpha_2x(g(x)+up_1(x)+vq_1(x)+vur_1(x))+\cdots+\alpha_{n-t}x^{n-t-1}(g(x)+up_1(x)+vq_1(x)+vur_1(x))+\beta_1(ua_1(x)+vq_2(x)+vur_2(x))+\beta_2x(ua_1(x)+vq_2(x)+vur_2(x))+\cdots+\beta_{t-t_1-2}x^{t-t_1-2}(ua_1(x)+vq_2(x)+vur_2(x))+\gamma_1(va_2(x)+vur_3(x))+\gamma_2x(va_2(x)+vur_3(x))+\cdots+\gamma_{t-t_2}x^{t-t_2-1}(va_2(x)+vur_3(x))+\delta_1x(vua_3(x))+\delta_2x(vua_3(x))+\cdots+\delta_{t'-t_3}x^{t'-t_3-1}(vua_3(x))$, where $\alpha_i=\alpha_{i1}+\alpha_{i2}u+\alpha_{i3}v+\alpha_{i4}uv$, $\beta_i=\beta_{i1}+\beta_{i2}u+\beta_{i3}v+\beta_{i4}uv$, $\gamma_i=\gamma_{i1}+\gamma_{i2}u+\gamma_{i3}v+\gamma_{i4}uv$ and $\delta_{i}=\delta_{i1}+\delta_{i2}u+\delta_{i3}v+\delta_{i4}uv$. We have
\begin{align*}
x^{t-t_1-1}(ua_1(x)+vq_2(x)+vur_2&(x))=(\alpha_{11}+\alpha_{21}x+\cdots+\alpha_{(n-t)1}x^{n-t-1})g(x)\\
&+u(\alpha_{12}+\alpha_{22}x+\cdots+\alpha_{(n-t)2}x^{n-t-1})g(x)\\
&+u(\alpha_{11}+\alpha_{21}x+\cdots+\alpha_{(n-t)1}x^{n-t-1})p_1(x)\\
&+u(\beta_{11}+\beta_{21}x+\cdots+\beta_{(t-t_1-1)1}x^{t-t_1-2})a_1(x)\\
&+vm_1(x)
+uvm_2(x), 
\end{align*}
where $m_1(x)$ and $m_2(x)$ are polynomials in $\Z_p[x].$ By comparing both sides, we have $\alpha_{i1} = 0$, $ 1 \leq i \leq n-t$ and $x^{t-t_1-1}a_1(x) = (\alpha_{12}+\alpha_{22}x+\cdots+\alpha_{(n-t)2}x^{n-t-1})g(x) + (\beta_{11}+\beta_{21}x+\cdots+\beta_{(t-t_1-1)1}x^{t-t_1-2})a_1(x)$.
 Note that $ \text{deg}(x^{t-t_1-1}a_1(x)) = t-1$ but $\text{deg}((\alpha_{12}+\alpha_{22}x+\cdots+\alpha_{(n-t)2}x^{n-t-1})g(x)) \geq t$ and $\text{deg}((\beta_{11}+\beta_{21}x+\cdots+\beta_{(t-t_1-1)1}x^{t-t_1-2})a_1(x)) \leq t-2$. Hence, this gives a contradiction.
\end{proof}


\begin{theo}
Let $n$ be a positive integer relatively prime to $p$ and $C$ be a cyclic code of length $n$ over $R_{u^2,v^2,p}$. If $C=\langle g(x)+ua_1(x)+uvr_1(x), va_2(x)+uva_3(x)\rangle$ with $t ={\rm deg}(g(x))$, $t_2 = {\rm deg}(a_2(x))$, then $C$ has rank $n-t_2$ and $|C| = p^{4n-2t-2t_2}$. The minimal spanning set of $C$ is  $B=\{g(x)+ua_1(x)+uvr_1(x), x(g(x)+ua_1(x)+uvr_1(x)),$ $\cdots, x^{n-t-1}(g(x)+ua_1(x)+uvr_1(x)),$ $va_2(x)+uva_3(x), x(a_2(x)+uva_3(x)), $ $\cdots, x^{t-t_2-1}(va_2(x)+uva_3(x))$.
\end{theo}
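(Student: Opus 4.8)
The plan is to parallel the proof of Theorem~\ref{rank-main}, simplified by the collapse of the generator set in the coprime case. Write $C=\langle A_1,A_2\rangle$ with $A_1=g(x)+ua_1(x)+uvr_1(x)$ and $A_2=va_2(x)+uva_3(x)$; by the preceding structure theorem $a_1\mid g\mid(x^n-1)$ and $a_3\mid a_2\mid g\mid(x^n-1)$, and by Proposition~\ref{gen-main} also $a_3\mid a_1$. Since $g(x)$ is a nonzero polynomial in $\Z_p[x]$, $A_1$ is a regular polynomial of degree $t$ (Proposition~\ref{regular-poly} and the remark after it), so it may be used as a divisor in the division algorithm (Proposition~\ref{division-alg}); this single tool replaces the several reductions carried out in Theorem~\ref{rank-main}.

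First I would show that $B$ spans $C$. It suffices to span the obvious generating set $\{x^iA_1:0\le i<n\}\cup\{x^iA_2:0\le i<n\}$. Using $A_1$ as divisor, each $x^iA_1$ with $i\ge n-t$ is pulled back modulo $x^n-1$ into the span of $A_1,xA_1,\dots,x^{n-t-1}A_1$. For the shifts of $A_2$: multiplying $A_1$ by $v$ gives $vA_1=v(g+ua_1)$, whose $v$-component has degree $t$; because $a_2\mid g$, subtracting a suitable $\Z_p$-multiple of a shift of $vA_1$ lowers the $v$-component of $x^jA_2$ for $j\ge t-t_2$ and, again by $a_2\mid g$, leaves a $v$-component that is a $\Z_p$-linear combination of $x^ka_2$ with $k<t-t_2$, i.e.\ of the $v$-components of $A_2,\dots,x^{t-t_2-1}A_2$. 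The residual pure-$uv$ terms are disposed of using $uvA_1=uvg$, $v\cdot x^iA_1=vx^ig+uvx^ia_1$ and $u\cdot x^jA_2=uvx^ja_2$: the relations $a_3\mid g$, $a_3\mid a_1$ and $a_3\mid a_2$ allow every $uv$-polynomial that occurs to be rewritten as a $\Z_p$-linear combination of shifts already listed in $B$. Hence $B$ is a spanning set.

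Next I would check minimality exactly as in Theorem~\ref{rank-main}: assume one listed element — the representative hard case being $x^{t-t_2-1}A_2$ — is an $R_{u^2,v^2,p}$-combination of the rest, split each coefficient into its $1,u,v,uv$ parts, and compare degrees in the $v$-slot: the left-hand side contributes a term of degree $t-1$, the $A_1$-shifts contribute $v$-slot terms of degree $\ge t$, and the lower $A_2$-shifts contribute degree $\le t-2$, a contradiction; the same comparison rules out $x^{n-t-1}A_1$. Thus $B$ is a minimal spanning set and the rank of $C$ is $|B|=(n-t)+(t-t_2)=n-t_2$. For the cardinality, apply $\phi$: from $|C|=|\text{Im}\phi|\cdot|\ker\phi|$ together with $\text{Im}\phi=\langle g+ua_1\rangle$, $\ker\phi=v\langle a_2+ua_3\rangle$ and the size formula for cyclic codes over $R_{u^2,p}$ of length coprime to $p$ from \cite{Aks-Pkk13}, one obtains $|\text{Im}\phi|=p^{2(n-t)}$ and $|\ker\phi|=p^{2(n-t_2)}$, whence $|C|=p^{4n-2t-2t_2}$. (Equivalently this may be read off $B$: the $n-t$ shifts of $A_1$ span a free submodule of size $p^{4(n-t)}$ since $A_1$ is regular, and each of the $t-t_2$ shifts of $A_2$ contributes an independent factor $p^2$ because $v\cdot A_2=0$ while $u\cdot A_2=uva_2\neq0$.)

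I expect the spanning step to be the crux. Reducing the $v$-components of the $A_2$-shifts is a straightforward division-algorithm computation, but showing that every residual pure-$uv$ polynomial — of degree up to roughly $t$ — lands in the span of $B$ forces one to combine all three divisibility relations $a_3\mid g$, $a_3\mid a_1$, $a_3\mid a_2$ simultaneously, much as the longer case-by-case bookkeeping does in the proof of Theorem~\ref{rank-main}; everything else reduces to a degree count.
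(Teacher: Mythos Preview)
Your approach is essentially the same as the paper's: the paper's entire proof reads ``The proof is same as Theorem~\ref{rank-main}'', and you explicitly parallel that argument, supplying the details the paper omits. Your cardinality computation via $|C|=|\mathrm{Im}\,\phi|\cdot|\ker\phi|$ is a clean alternative to reading the size off the spanning set, but this is a routine addition rather than a different route.

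One small point of care: when you invoke Proposition~\ref{gen-main} for $a_3\mid a_1$, note that the symbols $a_1,a_3$ in the coprime-case statement play the roles of $p_1,r_3$ (not $a_1,a_3$) in the four-generator notation of Proposition~\ref{gen-main}; the divisibility you want then comes from condition~(8) there, $a_3\mid\bigl(p_1-\frac{g}{a_2}r_3\bigr)$, after the appropriate relabelling. This is a notational translation rather than a gap, and the rest of your outline --- division by the regular element $A_1$, degree comparison for minimality, and the acknowledged bookkeeping for the $uv$-residuals --- matches what Theorem~\ref{rank-main} does.
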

\begin{proof}
The proof is same as Theorem \ref{rank-main}.
\end{proof}
\section{Minimum distance} \label{md}
Let $n$ be a positive integer not relatively prime to $p$. Let $ C $ be a cyclic code of length $n$ over $R_{u^2,v^2,p}$. We define $ C_{uv} = \{k(x) \in R_{u^2,v^2,p,n} : uv k(x) \in C\}.$ It is easy to see that $C_{uv}$ is a cyclic code over $\Z_p$.
\begin{theorem} \label{md1}
Let $n$ be not relatively prime to $p$. If $ C = \langle g(x)+up_1(x)+vq_1(x)+uvr_1(x),ua_1(x)+vq_2(x)+uvr_2(x),va_2(x)+uvr_3(x),uva_3(x)\rangle$ is a cyclic code of length $n$ over $R_{u^2,v^2,p}$ Then $ C_{uv} = \langle a_{3}(x)\rangle$ and $w_{H}(C) = w_{H}(C_{uv}).$ \end{theorem}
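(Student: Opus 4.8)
The plan is to split the assertion into the structural identity $C_{uv}=\langle a_3(x)\rangle$ and the metric identity $w_H(C)=w_H(C_{uv})$, and to treat each separately. For the first, I would note that $uv\cdot u=uv\cdot v=0$ in $R_{u^2,v^2,p}$, so for any $k(x)$ the element $uvk(x)$ depends only on the reduction of $k(x)$ modulo $\langle u,v\rangle$; hence the collection of polynomials $k(x)$ over $\Z_p$ with $uvk(x)\in C$ is exactly $C_4=\mathrm{Tor}(\mathrm{Tor}(C))$, which was already shown in Section~\ref{generator} to equal $\langle a_3(x)\rangle$. So $C_{uv}=\langle a_3(x)\rangle$ needs no further argument.

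For the weight statement I would first record two elementary observations about the Hamming weight on $R_{u^2,v^2,p,n}$: (i) for any $\alpha\in R_{u^2,v^2,p}$ and any $c(x)$ the support of $\alpha c(x)$ is contained in that of $c(x)$, so $w_H(\alpha c(x))\le w_H(c(x))$; and (ii) for $k(x)\in\Z_p[x]/\langle x^n-1\rangle$ one has $w_H(uvk(x))=w_H(k(x))$, because $uv\Z_p$ is a free rank-one $\Z_p$-summand of $R_{u^2,v^2,p}$, so $uva=0$ with $a\in\Z_p$ forces $a=0$. Granting (ii), the inequality $w_H(C)\le w_H(C_{uv})$ is immediate: a minimum-weight word $k_0(x)$ of $C_{uv}=\langle a_3(x)\rangle$ gives the nonzero word $uvk_0(x)\in C$ of the same Hamming weight.

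For the reverse inequality $w_H(C)\ge w_H(C_{uv})$, take an arbitrary nonzero $c(x)\in C$ and write it uniquely as $c(x)=c^{(0)}(x)+uc^{(1)}(x)+vc^{(2)}(x)+uvc^{(3)}(x)$ with each $c^{(j)}(x)\in\Z_p[x]/\langle x^n-1\rangle$; let $j$ be the least index with $c^{(j)}(x)\ne 0$. Multiplying $c(x)$ by $uv$, $v$, $u$, or $1$ according as $j=0,1,2,3$, the relations $u^2=v^2=0$ kill every term except the $j$-th and turn $c(x)$ into $uvc^{(j)}(x)$; this element lies in $C$, so $c^{(j)}(x)\in C_{uv}$, and by (i) and (ii) we get $w_H(c^{(j)}(x))=w_H(uvc^{(j)}(x))\le w_H(c(x))$. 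Hence every nonzero word of $C$ has Hamming weight at least that of some nonzero word of $C_{uv}$, so $w_H(C_{uv})\le w_H(C)$; combined with the previous paragraph this gives $w_H(C)=w_H(C_{uv})$.

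The step I expect to demand the most care is the case analysis in the last paragraph: one has to check, for each value of $j$, that the chosen multiplier annihilates exactly the layers of $c(x)$ above $c^{(j)}$ while leaving the $c^{(j)}$-layer intact --- this is precisely where $u^2=v^2=0$ and the freeness of the $uv\Z_p$-component are used --- and that the resulting relation $uvc^{(j)}(x)\in C$ is the literal membership defining $C_{uv}$. Notably, none of the divisibility conditions among $g(x),a_1(x),a_2(x),a_3(x)$ from Proposition~\ref{gen-main} are needed here; only the identification $C_{uv}=\langle a_3(x)\rangle$ together with facts (i) and (ii) enters the proof.
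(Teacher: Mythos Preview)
Your argument is correct and follows the same overall strategy as the paper---reduce the weight question to the $uv$-layer---but your execution is actually more careful on two points. First, for $C_{uv}=\langle a_3(x)\rangle$ you simply invoke the identification $C_{uv}=C_4=\mathrm{Tor}(\mathrm{Tor}(C))$ already set up in Section~\ref{generator}, whereas the paper re-derives this by writing $uvb(x)$ in terms of the four generators and using the divisibilities $a_3\mid a_1,a_2\mid g$. Second, and more substantively, for the inequality $w_H(C)\ge w_H(C_{uv})$ the paper only multiplies an arbitrary codeword $m(x)$ by $uv$, obtaining $uvm_0(x)$, and then asserts $w_H(uvC)\le w_H(C)$; but this step is vacuous when $m_0(x)=0$, since then $uvm(x)=0$. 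Your case analysis on the least index $j$ with $c^{(j)}\ne 0$, multiplying by $uv,v,u,1$ accordingly, is precisely what is needed to close that gap, and your remark that no divisibility conditions from Proposition~\ref{gen-main} enter the weight argument is accurate. So your route is the same in spirit but strictly tighter in the details.
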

\begin{proof}
We have $uv a_{3}(x) \in C$, thus $ \langle a_{3}(x)\rangle \subseteq C_{uv}.$ If $b(x) \in C_{uv}$, then $uvb(x) \in C$ and hence there exist polynomials $ b_1(x),b_2(x), b_3(x), b_4(x) \in \Z_p[x]$ such that $ uvb(x) = b_1(x)uvg(x) + b_2(x)uva_1(x) + b_3(x)uva_2(x) + b_4(x)uva_{3}(x).$
Since $a_{3}(x)|a_{2}(x)|g(x)$ and $a_{3}(x)|a_{1}(x)|g(x)$, we have $uvb(x) = m(x)uva_{3}(x)$ for some polynomial $m(x) \in \Z_p[x]$. So, $C_{uv} \subseteq \langle a_{3}(x)\rangle $, and hence $C_{uv} = \langle a_{3}(x)\rangle.$ Let $m(x) = m_0(x) + um_1(x) + v m_{2}(x) +uv m_3(x) \in C,$ where $m_0(x), m_1(x), m_2(x),\\ m_{3}(x) \in \Z_p[x].$
We have $uvm(x) = uvm_0(x),$ $w_{H}(uvm(x)) \leq w_{H}(m(x))$ and $uv C$ is subcode of $C$ with $w_{H}(uv C) \leq w_{H}(C)$. Therefore, it is sufficient to focus on the subcode $uv C$ in order to prove the theorem. Since $uv C = \langle uv a_{3}(x)\rangle$, we get $w_{H}(C) = w_{H}(C_{uv}).$
\end{proof}
\begin{definition}
Let $ m = b_{l-1}p^{l-1} + b_{l-2}p^{l-2} + \cdots + b_1p + b_0$, $b_i \in \Z_p, 0 
\leq i \leq l-1$, be the $p$-adic expansion of $m$.
\begin{enumerate} [{\rm (1)}]
 \item If $ b_{l-i}  \neq 0$ for all $1  \leq i \leq q, q < l, $ and $ b_{l-i} = 0 $ for all $i, q+1 \leq i \leq l$, then $m$ is said to have a $p$-adic length $q$ zero expansion.
\item If $ b_{l-i}  \neq 0$ for all $1  \leq i \leq q, q < l, $ $b_{l-q-1} = 0$ and $ b_{l-i} \neq 0 $ for some $i, q+2 \leq i \leq l$, then $m$ is said to have  $p$-adic length $q$ non-zero expansion.
\item If $ b_{l-i}  \neq 0$ for $1  \leq i \leq l, $ then $m$ is said to have a $p$-adic length $l$  expansion or $p$-adic full expansion.
\end{enumerate}
\end{definition}
\begin{lemma} \label{md-lemma}
Let $C$ be a cyclic code over $R_{u^2,v^2,p}$ of length $p^l$ where $l$ is a positive integer. Let $C = \langle a(x)\rangle$ where $a(x) = (x^{p^{l-1}} - 1)^bh(x)$, $ 1 \leq b < p$. If $h(x)$ generates a cyclic code of length $p^{l-1}$ and minimum distance $d$ then the minimum distance $d(C)$ of $C$ is $(b+1)d$.
\end{lemma}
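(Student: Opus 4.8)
The statement concerns a cyclic code $C = \langle a(x)\rangle$ over $R_{u^2,v^2,p}$ of length $p^l$, with $a(x) = (x^{p^{l-1}}-1)^b h(x)$ and $1 \le b < p$. The strategy is to reduce the minimum distance computation over $R_{u^2,v^2,p}$ to a combinatorial question over $\Z_p[x]/\langle x^{p^l}-1\rangle$, exploiting the factorization $x^{p^l}-1 = (x^{p^{l-1}}-1)^p$ in characteristic $p$. First I would observe that it suffices to treat the $\Z_p$-case: by Theorem \ref{md1}, the Hamming weight of a cyclic code over $R_{u^2,v^2,p}$ equals that of its torsion code $C_{uv} = \langle a_3(x)\rangle$ over $\Z_p$, so I may replace $C$ by the $\Z_p$-cyclic code generated by $a(x)$ and work entirely in $\Z_p[x]/\langle x^{p^l}-1\rangle$.

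For the lower bound $d(C) \ge (b+1)d$, the plan is as follows. Take any nonzero codeword $c(x) = a(x)m(x) = (x^{p^{l-1}}-1)^b h(x) m(x)$. Write $c(x)$ in the basis given by the $p$-ary "blocks'': since $x^{p^l}-1 = (x^{p^{l-1}}-1)^p$, I would substitute $y = x^{p^{l-1}}$ conceptually, or more concretely group the coefficient vector of $c(x)$ into $p$ consecutive blocks of length $p^{l-1}$ and expand $c(x) = \sum_{j=0}^{p-1} c_j(x)(x^{p^{l-1}}-1)^j$ with $\deg c_j < p^{l-1}$. The factor $(x^{p^{l-1}}-1)^b$ forces $c_0 = c_1 = \cdots = c_{b-1} = 0$ and forces each nonzero $c_j(x)$ (for $j \ge b$) to be a multiple of $h(x)$ modulo $x^{p^{l-1}}-1$, hence a nonzero codeword of the length-$p^{l-1}$ code generated by $h(x)$, so $w_H(c_j) \ge d$. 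The binomial coefficients $\binom{j}{b}$ for $b \le j \le p-1$ are all nonzero mod $p$ (this is where $1 \le b < p$ is essential — a Lucas/Kummer argument), so the "leading'' block and $b$ further blocks survive with disjoint support after regrouping, yielding at least $(b+1)$ blocks each of weight $\ge d$; summing gives $w_H(c) \ge (b+1)d$. Making the bookkeeping of which coefficient-blocks are nonzero precise is the main obstacle, and I would handle it by a careful induction on the number of trailing blocks or by the clean substitution $x \mapsto y$ that turns $x^{p^{l-1}}-1$ into a "unit plus nilpotent'' generator.

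For the upper bound $d(C) \le (b+1)d$, I would exhibit a codeword of weight exactly $(b+1)d$. Let $e(x)$ be a minimum-weight codeword of the length-$p^{l-1}$ code $\langle h(x)\rangle$, so $w_H(e) = d$ and $h(x) \mid e(x)$ in $\Z_p[x]/\langle x^{p^{l-1}}-1\rangle$. Then consider $c(x) = (x^{p^{l-1}}-1)^b \tilde e(x)$ where $\tilde e(x)$ is a lift of $e(x)$ to degree $< p^{l-1}$; expanding $(x^{p^{l-1}}-1)^b = \sum_{i=0}^{b}\binom{b}{i}(-1)^{b-i} x^{ip^{l-1}}$, each of the $b+1$ terms translates $\tilde e$ into a distinct block of $p^{l-1}$ coordinates, and since all $\binom{b}{i}$ are nonzero mod $p$ (again $b < p$), these $b+1$ shifted copies have disjoint support and each contributes weight $d$. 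One must check $c(x) \in C$, i.e. that $a(x) \mid c(x)$: since $e(x) = h(x)f(x)$ we get $c(x) = (x^{p^{l-1}}-1)^b h(x) f(x) = a(x) f(x)$ up to the degree-reduction lift, which lies in $C$. Hence $w_H(c) = (b+1)d$, completing the equality. I expect the lower bound's block-disjointness argument to be the delicate part; the upper bound is essentially an explicit construction.
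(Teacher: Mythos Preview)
Your overall plan is more careful than the paper's. The paper does not separate upper and lower bounds: it expands $(x^{p^{l-1}}-1)^b$ binomially and asserts, for an \emph{arbitrary} codeword $c=(x^{p^{l-1}}-1)^bh(x)m(x)$, the identity
\[
w(c)=\sum_{i=0}^{b}w\!\left(\tbinom{b}{i}\,x^{ip^{l-1}}h(x)m(x)\right),
\]
reading $h(x)m(x)$ as a word of the length-$p^{l-1}$ code, and concludes $d(C)=(b+1)d$ in one line. Your upper-bound construction is exactly this computation applied to a fixed minimum-weight word of $\langle h(x)\rangle$ lifted to degree $<p^{l-1}$; there the $b+1$ shifts genuinely have disjoint supports, and that direction is correct.

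The lower bound, however, has a genuine gap --- and it cannot be repaired, because the lemma as stated is false. Your key claim, that in the expansion $c(x)=\sum_{j\ge b}c_j(x)(x^{p^{l-1}}-1)^j$ each nonzero $c_j$ is a multiple of $h(x)$ modulo $x^{p^{l-1}}-1$, fails in general. Take $p=3$, $l=2$, $b=1$, $h(x)=x-1$ (so $d=2$) and $C=\langle(x^3-1)(x-1)\rangle=\langle(x-1)^4\rangle\subset\Z_3[x]/\langle x^9-1\rangle$. The element $(x-1)^6=(x^3-1)^2=x^6+x^3+1$ lies in $C$ and has Hamming weight $3<(b+1)d=4$; in your expansion it is $c_2(x)(x^3-1)^2$ with $c_2=1$, which is not a multiple of $h$. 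The paper's displayed weight identity breaks at the same point: with $m(x)=(x-1)^2$ one has $h(x)m(x)\equiv0$ in the length-$3$ code, so the right-hand side is $0$ while $w(c)=3$. The underlying issue for both arguments is that $m(x)$ may contribute extra factors of $(x^{p^{l-1}}-1)$, and once it does the ``$b+1$ disjoint blocks of weight $\ge d$'' picture collapses; controlling this is precisely the delicate part of the minimum-distance theory for repeated-root cyclic codes, and neither the paper's proof nor your proposed block decomposition handles it.
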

\begin{proof}
For $ c \in C$, we have $ c = (x^{p^{l-1}} - 1)^bh(x)m(x)$ for some $ m(x) \in \frac{R_{u^2,v^2,p}[x]}{(x^{p^l}-1)}$. Since $h(x)$ generates a cyclic code of length $p^{l-1}$, we have $w(c) = w((x^{p^{l-1}} - 1)^bh(x)m(x)) = w(x^{p^{l-1}b}h(x)m(x)) + w(^bC_1x^{p^{l-1}(b-1)}h(x)m(x)) + \cdots + w(^bC_{b-1}\\x^{p^{l-1}}h(x)m(x)) + w(h(x)m(x))$. Thus, $ d(C) = (b + 1)d$.
\end{proof}

\begin{theorem} \label{md-thm}
Let $C$ be a cyclic code over $R_{u^2,v^2,p}$ of length $p^l$ where $l$ is a positive integer. Then,  $ C = \langle g(x)+up_1(x)+vq_1(x)+uvr_1(x),ua_1(x)+vq_2(x)+uvr_2(x),va_2(x)+uvr_3(x),uva_3(x)\rangle $ where $g(x) = (x-1)^{t}, a_1(x) = (x-1)^{t_1}, a_{2}(x) = (x-1)^{t_2}$ and $a_3(x) = (x-1)^{t_3}$ for some $t > t_1 > t_3 > 0$ and $t > t_2 > t_3 > 0$ 
\begin{enumerate}[{\rm (1)}]
\item If $t_3 \leq p^{l-1},$ then $d(C) = 2$. 
\item If $t_3 > p^{l-1},$ let $t_3 = b_{l-1}p^{l-1} + b_{l-2}p^{l-2} + \cdots + b_1p + b_0$ be the $p$-adic expansion of $t_3$ and $ a_{3}(x) = (x-1)^{t_3} = (x^{p^{l-1}} - 1)^{b_{l-1}}(x^{p^{l-2}} - 1)^{b_{l-2}} \cdots (x^{p^{1}} - 1)^{b_1}(x^{p^0} - 1)^{b_0}$.
\begin{enumerate}[{\rm ($a$)}]
 \item If $t_3$ has a $p$-adic length $q$ zero expansion or full expansion $(l=q)$, then $d(C) = (b_{l-1}+1)(b_{l-2}+1)\cdots(b_{l-q}+1).$
\item If $t_3$ has a $p$-adic length $q$ non-zero expansion, then $d(C) = 2(b_{l-1}+1)(b_{l-2}+1)\cdots(b_{l-q}+1).$
\end{enumerate}
\end{enumerate}
\end{theorem}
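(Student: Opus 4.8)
The plan is to reduce the minimum distance of $C$ to that of the one-dimensional ternary-style subcode generated by $uv\,a_3(x)$, and then to compute the latter by unwinding the $p$-adic factorization of $a_3(x)=(x-1)^{t_3}$. By Theorem \ref{md1} we already know $w_H(C)=w_H(C_{uv})$ and $C_{uv}=\langle a_3(x)\rangle$ as a cyclic code over $\Z_p$ of length $p^l$. So the whole problem becomes: compute the Hamming distance of the $\Z_p$-cyclic code of length $p^l$ generated by $(x-1)^{t_3}$. First I would record the factorization $x^{p^i}-1=(x-1)^{p^i}$ in $\Z_p[x]$, so that $(x-1)^{t_3}=\prod_{i=0}^{l-1}(x^{p^i}-1)^{b_i}$ once $t_3$ is written in its $p$-adic expansion; this is the displayed identity in the statement and is what lets Lemma \ref{md-lemma} be applied iteratively.

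The core of the argument is an induction on $l$ using Lemma \ref{md-lemma}. For part (1), if $t_3\le p^{l-1}$ then $(x-1)^{t_3}\mid (x-1)^{p^{l-1}}=x^{p^{l-1}}-1$, so the codeword $x^{p^{l-1}}-1=-1+x^{p^{l-1}}$ lies in $C_{uv}$ and has Hamming weight $2$; since $a_3(x)\ne x^{p^l}-1$ (because $t_3<p^l$) the code is nonzero and cannot have a weight-$1$ word (a weight-one word would be a unit times a power of $x$, hence a unit, forcing $C_{uv}$ to be everything, impossible as $t_3>0$), so $d=2$. For part (2), write $a_3(x)=(x^{p^{l-1}}-1)^{b_{l-1}}h(x)$ with $h(x)=\prod_{i=0}^{l-2}(x^{p^i}-1)^{b_i}=(x-1)^{t_3-b_{l-1}p^{l-1}}$, the generator of a length-$p^{l-1}$ cyclic code over $\Z_p$ (here one needs $1\le b_{l-1}<p$, which holds because $t_3>p^{l-1}$ forces $b_{l-1}\ge 1$ and $b_{l-1}\le p-1$ always). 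Lemma \ref{md-lemma} gives $d(C_{uv})=(b_{l-1}+1)\,d(\langle h(x)\rangle)$, and then I would recurse on $h(x)$ inside $\Z_p[x]/(x^{p^{l-1}}-1)$. In case (a), a length-$q$ zero expansion means $b_{l-1},\dots,b_{l-q}$ are the only nonzero digits, so after $q$ applications of the lemma we reach $h=(x-1)^{0}=1$ generating the whole space with minimum distance $1$, giving $d(C)=\prod_{i=1}^{q}(b_{l-i}+1)$; the full-expansion case $l=q$ is literally the same computation run all the way down. In case (b), a length-$q$ non-zero expansion means $b_{l-1},\dots,b_{l-q}$ are nonzero, $b_{l-q-1}=0$, and some lower digit is nonzero, so after $q$ applications of the lemma we are reduced to the tail $(x-1)^{t_3'}$ with $t_3'=\sum_{i< l-q}b_ip^i$ satisfying $t_3'\le p^{l-q-1}$ (the leading digit of the tail block is the zero digit $b_{l-q-1}$, so $t_3'<p^{l-q-1}$... more precisely $t_3'\le p^{l-q-2}\cdot(p-1)+\cdots < p^{l-q-1}$), which by part (1), applied at level $l-q$, has minimum distance $2$; hence $d(C)=2\prod_{i=1}^{q}(b_{l-i}+1)$.

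The main obstacle I expect is bookkeeping in the inductive step, specifically verifying in case (b) that after peeling off the $q$ nonzero leading digits the remaining generator $(x-1)^{t_3'}$ really does fall under the hypothesis of part (1) at the reduced length — i.e. that the next digit being zero forces $t_3'$ small enough that $(x-1)^{t_3'}$ divides $x^{p^{l-q-1}}-1$ and so contributes a factor of exactly $2$ rather than $1$ or something larger. This amounts to the elementary inequality $t_3' = b_{l-q-2}p^{l-q-2}+\cdots+b_0 \le (p-1)(p^{l-q-2}+\cdots+1) = p^{l-q-1}-1 < p^{l-q-1}$, which is clean but must be stated carefully, together with the observation that $t_3'>0$ in the non-zero case so the contribution is not the trivial $1$. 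A secondary point to be careful about is that Lemma \ref{md-lemma} is stated for a code given by a single generator $\langle a(x)\rangle$ over $R_{u^2,v^2,p}$, and we are applying it to $C_{uv}$ viewed over $\Z_p$; since $\Z_p$ embeds in $R_{u^2,v^2,p}$ and the weight computation in the lemma's proof only uses the additive/combinatorial structure of expanding $(x^{p^{l-1}}-1)^b$, the same proof goes through verbatim over $\Z_p$, but I would note this explicitly so the reduction is airtight.
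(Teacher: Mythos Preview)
Your proposal is correct and follows essentially the same route as the paper: reduce via Theorem~\ref{md1} to the $\Z_p$-cyclic code $\langle (x-1)^{t_3}\rangle$, use the factorization $(x-1)^{t_3}=\prod_i (x^{p^i}-1)^{b_i}$, and iterate Lemma~\ref{md-lemma}. The only cosmetic difference is that the paper runs the induction bottom-up (starting from $h(x)=(x^{p^{l-q}}-1)^{b_{l-q}}$ as a code of length $p^{l-q+1}$ and multiplying on successive factors) whereas you peel off the top factor $(x^{p^{l-1}}-1)^{b_{l-1}}$ first and recurse downward; both are the same induction read in opposite directions, and your explicit check that $t_3'<p^{l-q-1}$ in case~(b) and your remark that Lemma~\ref{md-lemma} applies verbatim over $\Z_p$ are welcome clarifications that the paper leaves implicit.
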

\begin{proof}
The first claim easily follows from Proposition \ref{gen-main}. From Theorem \ref{md1}, we see that $d(C) = d(uvC) = d(\langle(x - 1)^{t_3}\rangle)$. Hence, we only need to determine the minimum weight of $uvC = \langle(x - 1)^{t_3}\rangle.$\\
(1) If $t_3 \leq p^{l-1},$ then $(x - 1)^{t_3}(x - 1)^{p^{l-1}-t_3} = (x - 1)^{p^{l-1}} = (x^{p^{l-1}} - 1) \in C$. Thus, $d(C) = 2.$\\
(2) Let $ t_3 > p^{l-1}$. (a) If $t_3$ has a $p$-adic length $q$ zero expansion, we have $t_3 = b_{l-1} p^{l-1} + b_{l-2}p^{l-2} + \cdots + b_{l-q}p^{l-q}$, and $a_{3}(x) = (x - 1)^{t_3} = (x^{p^{l-1}}-1)^{b_{l-1}}(x^{p^{l-2}}-1)^{b_{l-2}}\cdots(x^{p^{l-q}}-1)^{b_{l-q}}.$ Let $ h(x) = (x^{p^{l-q}}-1)^{b_{l-q}}.$ Then $h(x)$ generates a cyclic code of length $ p^{l-q+1}$ and minimum distance $ (b_{l-q}+1)$. By Lemma \ref{md-lemma}, the subcode generated by $(x^{p^{l-q+1}}-1)^{b_{l-q+1}}h(x)$ has minimum distance $ (b_{l-q+1}+1) (b_{l-q}+1).$ By induction on $q$, we can see that the code generated by $a_{3}(x)$ has minimum distance $(b_{l-1}+1)(b_{l-2}+1)\cdots(b_{l-q}+1).$ Thus, $d(C) = (b_{l-1}+1)(b_{l-2}+1)\cdots(b_{l-q}+1).$\\
(b) If $t_3$ has a $p$-adic length $q$ non-zero expansion, we have $t_3 = b_{l-1} p^{l-1} + b_{l-2}p^{l-2} + \cdots + b_{1}p + b_0, b_{l-q-1} = 0.$ Let $ r = b_{l-q-2}p^{l-q-2}+ b_{l-q-3}p^{l-q-3}+ \cdots + b_1p + b_0$ and $ h(x) = (x-1)^r = (x^{p^{l-q-2}}-1)^{b_{l-q-2}}(x^{p^{l-q-3}}-1)^{b_{l-q-3}}\cdots(x^{p^{1}}-1)^{b_{1}}(x^{p^{0}}-1)^{b_{0}}.$ Since $ r < p^{l-q-1}$, we have $ p^{l-q-1} = r+j$ for some non-zero $j$. Thus, $ (x-1)^{p^{l-q-1}-j}h(x) = (x^{p^{l-q-1}} - 1) \in C.$ Hence, the subcode generated by $h(x)$ has minimum distance 2. By Lemma \ref{md-lemma}, the subcode generated by $ (x^{p^{l-q}} - 1)^{b_{l-q}}h(x)$ has minimum distance $2(b_{l-q}+1)$. By induction on $q$, we can see that the code generated by $a_{3}(x)$ has minimum distance $2(b_{l-1}+1)(b_{l-2}+1)\cdots(b_{l-q}+1).$ Thus, $d(C) = 2(b_{l-1}+1)(b_{l-2}+1)\cdots(b_{l-q}+1).$\\
\end{proof}

\section{Examples} \label{exm}
\begin{example}
Cyclic codes of length $3$ over $R_{u^2,v^2,3} = \Z_3 + u \Z_3 + v \Z_3 + uv \Z_3, u^2 = 0, v^2 = 0, uv = vu$: We have
$$ x^3-1 = (x-1)^3 ~\text{over}~ R_{u^2,v^2,3}.$$
Let $g = x-1$. The non-zero cyclic codes of length $3$ over $R_{u^2,v^2,3}$ with generator polynomial, rank and minimum distance are given in table below.
\end{example}

{\bf Table 1.} Non zero cyclic codes of length 3 over $R_{u^2,v^2,3}$.\\
\begin{tabular}{| l | c| c |}
\hline
Non-zero generator polynomials & Rank & d(C)\\
\hline
$<uvg^2>$ & 1 & 3\\
\hline
$<uvg>$ & 2 & 2\\
\hline
$<uv>$ & 3 & 1\\
\hline
$<vg^2+uvc_0g>$ & 1 & 3\\
\hline
$<vg^2+uvc_0, uvg>$ & 2 & 2\\
\hline
$<vg^2, uv>$ & 3 & 1\\
\hline
$<vg+uvc_0>$ & 2 & 2\\
\hline
$<vg, uv>$ & 3 & 1\\
\hline
$< v >$ & 3 & 1\\
\hline
$< ug^2+vc_0g^2+uvc_1g >$& 1 & 3\\
\hline
$< ug^2+vc_0g^2+uvc_1, uvg>$& 2&2\\
\hline
$< ug^2+vc_0g^2, uv>$&3&1\\
\hline
$< ug^2+uvc_0g, vg^2+uvc_1g >$&2&3\\
\hline
$< ug^2+vc_0g+uvc_1, vg^2+uvc_2, uvg>$, $c_0c_2=0$&3&2\\
\hline

\end{tabular}

\begin{tabular}{| l |c |c|}

\hline
Non-zero generator polynomials & Rank & d(C)\\
\hline
$< ug^2+vc_0g, vg^2, uv>$&4&1\\
\hline
$< ug^2+uvc_0, vg+uvc_1 >$&3&2\\
\hline
$<ug^2+vc_0, vg, uv>$&4&1\\
\hline
$< ug^2, v >$&4&1\\
\hline
$< ug+vc_0g+uvc_1>$&2&2\\
\hline
$< ug+vc_0g, uv>$&3&1\\
\hline
$< ug+vc_0g+uvc_1, vg^2+uvc_2>$&3&2\\
\hline
$< ug+v(c_0x+c_1), vg^2, uv>$&4&1\\
\hline
$< ug+uvc_0, vg+uvc_1 >$&4&2\\
\hline
$< ug+vc_0, vg, uv>$&5&1\\
\hline
$< ug, v >$&5&1\\
\hline
$< u+v(c_0x^2+c_1x+c_2) >$&3&1\\
\hline
$< u+v(c_0x+c_1), vg^2>$&4&1\\
\hline
$< u+vc_0, vg >$&5&1\\ 
\hline
$< u, v >$&6&1\\
\hline
$<g^2+uc_0g+vc_1g+uvc_2g >$ &1&3\\
\hline
$<g^2+uc_0g+vc_1g+uvc_2, uvg>$ &2&2\\
\hline
$<g^2+uc_0g+vc_1g, uv>$&3&1\\
\hline
$<g^2+uc_0g+vc_1+uvc_2, vg+uvc_3>$&2&2\\
\hline
$<g^2+uc_0g+vc_1, vg, uv>$&3&1\\
\hline
$<g^2+uc_0g, v>$&3&1\\
\hline
$<g^2+uc_0+vc_1g+uvc_2, ug+vc_3g+uvc_4 >$&2&2\\
\hline
$<g^2+uc_0+v(c_1g+c_0c_2), ug+vc_2g, uv>$&3&1\\
\hline
$<g^2+uvc_0, ug+uvc_1, vg+uvc_2>$&3&2\\
\hline
$<g^2+uc_0+vc_1, ug+vc_2, vg, uv>$, $c_0c_2=0$&4&1\\
\hline
$<g^2+uc_0, ug, v >$&4&1\\
\hline
$<g^2+vc_0g, u+v(c_1x+c_2), uv>$&4&1\\
\hline
$<g^2+vc_0, u+vc_1, vg >$&4&1\\
\hline
$<g^2, u, v >$&5&1\\
\hline
$<g+uc_0+vc_1+uvc_2>$&2&2\\
\hline
$<g+uc_0+vc_1, uv>$&3&1\\
\hline
$<g+uc_0,  v>$&3&1\\
\hline
$<g+vc_0, u+vc_1 >$&3&1\\
\hline
$<g, u, v>$ &4&1\\
\hline
$<1>$ & 3 & 1 \\
\hline
\end{tabular}
\begin{remark}
If $C$ is a cyclic code over a principal ideal ring $R$  of length $n$, then from \cite{Hori-Shiro99}, we know that the following inequality holds for the Hamming distance $d(C)$ of $C$:
\[ d(C) \leq n - \text{Rank}~C + 1.\]
Note That $R_{u^2,v^2,p}$ is a local ring but not principal ideal ring. From Table 1, it is clear that if $n$ is not relatively prime to $p$, then the above inequality does not hold for $R_{u^2,v^2,p}$. Hence, in general, the above inequality
may not hold for a local ring. It is good problem to establish analogous inequality for a local ring.
\end{remark}

In Table 2, we give examples of optimal ternary codes obtained as the Gray images of cyclic codes over $R_{u^2,v^2,3}$. In Table 2, $[~.~]^*$ will denote that the ternary code is an optimal code. From Table 2, we can see that we have obtained all ternary optimal code except $[12, 2, 9]^*$.\\

{\bf Table 2.}  Ternary images of some cyclic codes of length 3 over $R_{u^2,v^2,3}$.\\
\begin{center}
\begin{tabular}{| l | c| c |}
\hline
Non-zero generator polynomials & $\phi_L(C)$\\
\hline
$<uvg^2>$ & $[12, 1, 12]^*$\\
\hline
$< vg^2+uv2g >$& $[12, 2, 8]$\\
\hline
$< ug^2+vg^2+uv2g >$& $[12, 3, 8]^*$\\
\hline
$< ug^2+vg^2+uv, uvg>$& $[12, 4, 6]^*$\\
\hline
$< ug^2, vg^2, uvg>$&$[12, 5, 6]^*$\\
\hline
$< ug^2+vg, vg^2, uvg >$ & $[12, 6, 6]^*$\\
\hline
$< ug+vg, uv>$ & $[12, 7, 4]^*$\\
\hline
$< g^2, ug, vg >$ & $[12, 8, 3]^*$\\
\hline
$< g^2, ug, vg, uv >$ & $[12, 9, 3]^*$\\
\hline
$< g^2, ug, v >$ & $[12, 10, 2]^*$\\
\hline
$< g, u+2v >$ & $[12, 11, 2]^*$\\
\hline
\end{tabular}
\end{center}

\section{The images of cyclic codes over $R_{u^2,v^2,p}$} \label{p-ary-im}
Let $T$ be a cyclic shift operator defined as $T((c_0, c_1, \cdots, c_{n-1}))$ = $(c_{n-1}, c_0,$ $ \cdots, c_{n-2}).$  A linear code $C$ of length $n$ is said to be an $l$-quasi cyclic code if $T^l((c_0, c_1, \cdots, c_n)) \in C$ whenever $(c_0, c_1, \cdots, c_n) \in C$. Let $\phi_L$ be the Gray map on $R_{u^2,v^2,p}$ as defined in \ref{graymap}. We can prove the following lemma and theorem in a similar way to \cite[Lemma 5.2, Theorem 5.3]{Yil-Kar11}.
\begin{lemma}
 If $T$ is a cyclic shift operator on vectors of length $n$, then $\phi_L\circ T = T^4 \circ \phi_L$.
\end{lemma}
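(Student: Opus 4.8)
The plan is to unwind the definitions of the cyclic shift $T$ on $R_{u^2,v^2,p}^n$ and of the Gray map $\phi_L$, and to verify that applying $\phi_L$ after one cyclic shift produces the same vector as applying $\phi_L$ first and then shifting the resulting $4n$-tuple by four coordinate positions. Concretely, I would write an arbitrary codeword as $\mathbf{a}=(a_1,a_2,\dots,a_n)$ with each $a_i=r_i+ub_i+vc_i+uvd_i$, $r_i,b_i,c_i,d_i\in\Z_p$, so that $\phi_L(\mathbf{a})=(\phi_L(a_1),\phi_L(a_2),\dots,\phi_L(a_n))$ is the concatenation of the length-four blocks $\phi_L(a_i)=(r_i+b_i+c_i+d_i,\;c_i+d_i,\;b_i+d_i,\;d_i)$.

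First I would compute the left-hand side: $T(\mathbf{a})=(a_n,a_1,\dots,a_{n-1})$, hence $\phi_L(T(\mathbf{a}))=(\phi_L(a_n),\phi_L(a_1),\dots,\phi_L(a_{n-1}))$, which is the concatenation of the same length-four blocks but with the block $\phi_L(a_n)$ moved to the front. Next I would compute the right-hand side: $T^4$ acts on a vector of length $4n$ by moving the last four coordinates to the front; applied to $\phi_L(\mathbf{a})$, whose coordinates are naturally grouped into the $n$ blocks $\phi_L(a_i)$, the last four coordinates are exactly the block $\phi_L(a_n)$, so $T^4(\phi_L(\mathbf{a}))=(\phi_L(a_n),\phi_L(a_1),\dots,\phi_L(a_{n-1}))$. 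Comparing the two expressions gives $\phi_L\circ T=T^4\circ\phi_L$ on every vector, which is the claim.

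The only point requiring a little care — and the step I expect to be the main (mild) obstacle — is the bookkeeping that identifies "the last four coordinates of the $4n$-tuple $\phi_L(\mathbf{a})$" with "the block $\phi_L(a_n)$," i.e.\ checking that the block structure imposed by the componentwise definition of $\phi_L$ on $R_{u^2,v^2,p}^n$ is aligned with the grouping of coordinates of $\Z_p^{4n}$ into consecutive blocks of length four, so that a single $T^4$ shift of $\Z_p^{4n}$ corresponds precisely to a single $T$ shift of the blocks. Once this alignment is made explicit, the identity is immediate and the proof is essentially a one-line verification, entirely analogous to \cite[Lemma 5.2]{Yil-Kar11}; no properties of $p$ or of the ideal structure of $R_{u^2,v^2,p}$ are needed.
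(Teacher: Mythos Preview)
Your proposal is correct and follows exactly the approach the paper intends: the paper does not spell out a proof but simply notes that it is proved ``in a similar way to \cite[Lemma 5.2]{Yil-Kar11},'' which is precisely the block-by-block verification you describe. The only bookkeeping point you flagged---that the componentwise extension of $\phi_L$ aligns the $n$ length-four blocks with consecutive coordinates of $\Z_p^{4n}$, so that $T^4$ on $\Z_p^{4n}$ cycles the blocks---is indeed the whole content of the argument.
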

By using this lemma it is easy to prove the following theorem:
\begin{theo}
If $C$ is a cyclic code of length $n$ over $R_{u^2,v^2,p}$, then $\phi_L(C)$ is a $4$-quasi cyclic code of length $4n$ over $\Z_p$.
\end{theo}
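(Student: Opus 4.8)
The plan is to reduce the statement entirely to the preceding lemma, which already carries the only non-trivial bookkeeping. First I would recall that $\phi_L$ is $\Z_p$-linear and maps $R_{u^2,v^2,p}^n$ into $\Z_p^{4n}$, so that $\phi_L(C)$ is automatically a $\Z_p$-linear code of length $4n$ (this is the earlier theorem on Gray images of linear codes). Thus the whole content to be checked is that $\phi_L(C)$ is invariant under the shift $T^4$ acting on $\Z_p^{4n}$; once that is done, $\phi_L(C)$ satisfies exactly the definition of a $4$-quasi-cyclic code.

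For the invariance, take an arbitrary $y \in \phi_L(C)$ and write $y = \phi_L(\mathbf{c})$ for some codeword $\mathbf{c} = (c_0, c_1, \dots, c_{n-1}) \in C$. Applying the lemma with $T$ the cyclic shift on vectors of length $n$ gives
\[ T^4(y) = T^4(\phi_L(\mathbf{c})) = \phi_L(T(\mathbf{c})). \]
Since $C$ is cyclic, $T(\mathbf{c}) = (c_{n-1}, c_0, \dots, c_{n-2}) \in C$, hence $\phi_L(T(\mathbf{c})) \in \phi_L(C)$, that is, $T^4(y) \in \phi_L(C)$. As $y \in \phi_L(C)$ was arbitrary, $\phi_L(C)$ is closed under $T^4$, and combined with linearity this is precisely the assertion that $\phi_L(C)$ is a $4$-quasi-cyclic code of length $4n$ over $\Z_p$.

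The only genuine work sits inside the lemma $\phi_L \circ T = T^4 \circ \phi_L$: one writes each coordinate $c_i = a_i + u b_i + v e_i + uv d_i$ with $a_i,b_i,e_i,d_i \in \Z_p$, and verifies that shifting the $n$ ring-coordinates cyclically by one position corresponds, under the block structure of $\phi_L$, to shifting the $4n$ $\Z_p$-coordinates cyclically by four positions. Since that lemma is granted, the theorem itself requires nothing beyond the two short steps above; the only anticipated difficulty is notational — keeping the four-entry blocks of $\phi_L$ aligned when unwinding the lemma — and no real obstacle remains once the lemma is in hand.
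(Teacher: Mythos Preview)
Your proposal is correct and follows exactly the route the paper intends: the paper's own proof is simply the remark that the theorem is easy once the lemma $\phi_L \circ T = T^4 \circ \phi_L$ is in hand, and you have written out precisely that deduction. Nothing more is required.
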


\bibliographystyle{plain}
\bibliography{ref}

\end{document}